\newtheorem{theorem}{Theorem}
\newtheorem{corollary}{Corollary}
\newtheorem{definition}{Definition}
\DeclareMathOperator{\Tr}{Tr}
\newcommand{\eref}[1]{\eqref{#1}}
\newcommand{\Eref}[1]{Equation~\eqref{#1}}
\newcommand{\Sref}[1]{Section~\ref{#1}}
\newcommand{\Fref}[1]{Figure~\ref{#1}}
\begin{document}
%%%%%%%%%%%%%%%%%%%%%%
%       Title        %
%%%%%%%%%%%%%%%%%%%%%%
\title{Entanglement witness and multipartite quantum state discrimination}
\author{Donghoon Ha}
\affiliation{Department of Applied Mathematics and Institute of Natural Sciences, Kyung Hee University, Yongin 17104, Republic of Korea}
\author{Jeong San Kim}
\email{freddie1@khu.ac.kr}
\affiliation{Department of Applied Mathematics and Institute of Natural Sciences, Kyung Hee University, Yongin 17104, Republic of Korea}
%%%%%%%%%%%%%%%%%%%%%%
%      Abstract      %
%%%%%%%%%%%%%%%%%%%%%%
\begin{abstract}
We consider multipartite quantum state discrimination and show that the minimum-error discrimination by separable measurements is closely related to the concept of entanglement witness. Based on the properties of entanglement witness, we establish some necessary and/or sufficient conditions on minimum-error discrimination by separable measurements. We also provide some conditions on the upper bound of the maximum success probability over all possible separable measurements. Our results are illustrated by examples of multidimensional multipartite quantum states. Finally, we provide a systematic way in terms of the entanglement witness to construct multipartite quantum state ensembles showing nonlocality in state discrimination.
\end{abstract}
\maketitle
%%%%%%%%%%%%%%%%%%%%%%
%    Introduction    %
%%%%%%%%%%%%%%%%%%%%%%
\section{Introduction}
%%%%%%%%%%%%%%%%%%%%%%
Quantum state discrimination is one of the fundamental concepts used in various quantum information and computation theory\cite{chef2000,barn20091,berg2010,bae2015}. 
In general, we can always perfectly discriminate orthogonal quantum states using appropriate measurement. However, nonorthogonal quantum states cannot be perfectly discriminated by means of any measurement. For this reason, various state discrimination strategies have been studied for optimal discrimination of nonorthogonal quantum states, such as minimum-error discrimination, unambiguous discrimination and maximum-confidence discrimination\cite{hels1969,ivan1987,diek1988,pere1988,crok2006}.
%%%%%%%%%%%%%%%%%%%%%%

%%%%%%%%%%%%%%%%%%%%%%
\emph{Entanglement witness}(EW) is an important tool to
detect the existence of entanglement inherent in a multipartite quantum state\cite{horo1996,terh2000,lewe2000,chru2014}. Mathematically, EW is a Hermitian operator having non-negative mean value for every separable state, but negative for some entangled states.
As EW provides an useful methodology to detect entanglement that is an important quantum nonlocality, 
it is natural to ask whether EW can also be used to characterize 
other nonlocal phenomenon of multipartite quantum states.
%%%%%%%%%%%%%%%%%%%%%%

%%%%%%%%%%%%%%%%%%%%%%
Quantum nonlocal phenomenon also arises in discriminating multipartite quantum states; quantum nonlocality occurs when optimal state discrimination cannot be realized only by \emph{local operations and classical communication}(LOCC)\cite{pere1991,benn19991,ghos2001,chit2013,jian2020,zuo2021,wang2023}. 
However, characterizing local discrimination of quantum states is a hard task and very little is known due to the lack of good mathematical structure for LOCC.
%%%%%%%%%%%%%%%%%%%%%%

%%%%%%%%%%%%%%%%%%%%%%
Here, we establish a specific relation between the properties of EW and separable measurements, a mathematically well-structured set of measurements having LOCC measurements as a special case.
We show that the minimum-error discrimination of multipartite quantum states using separable measurements strongly depends on the existence of EW.
More precisely, we establish conditions on minimum-error discrimination by separable measurements in terms of EW. We also provide conditions on the upper bound of the maximum success probability over all possible separable measurements. We illustrate our results using examples of multidimensional multipartite quantum states. Finally, we provide a systematic way in terms of EW to construct multipartite quantum state ensembles showing nonlocality in state discrimination.
%%%%%%%%%%%%%%%%%%%%%%

%%%%%%%%%%%%%%%%%%%%%%
This paper is organized as follows.
In \Sref{sec:pre},
we first recall the definitions and some properties about separable measurements and EW. We also recall the definition of minimum-error discrimination as well as some useful properties of the optimal measurements. 
In \Sref{sec:ppq}, we provide conditions on the upper bound of the maximum success probability over all possible separable measurements (Theorems~\ref{thm:pptq} and \ref{thm:mnsc}).
In \Sref{sec:mep},
we provide conditions on minimum-error discrimination by separable measurements in terms of EW (Theorems~\ref{thm:qmsc} and \ref{thm:qupb}). Our results are illustrated by examples of multidimensional multipartite quantum states.
In \Sref{sec:ces}, we provide a systematic way in terms of EW to construct multipartite quantum state ensembles showing nonlocality in state discrimination. We conclude our results in \Sref{sec:dis}.
%%%%%%%%%%%%%%%%%%%%%%

%%%%%%%%%%%%%%%%%%%%%%
%      Section       %
%%%%%%%%%%%%%%%%%%%%%%
\section{Preliminaries}\label{sec:pre}
%%%%%%%%%%%%%%%%%%%%%%
For a multipartite Hilbert space $\mathcal{H}=\bigotimes_{k=1}^{m}\mathbb{C}^{d_{k}}$ with $m\geqslant2$ and positive integers $d_{k}$ for $k=1,\ldots,m$, let $\mathbb{H}$ be the set of all Hermitian operators acting on $\mathcal{H}$.
We use $\mathbb{H}_{+}$ to denote the set of all positive-semidefinite operators in $\mathbb{H}$, that is,
\begin{equation}\label{eq:H+} 
\mathbb{H}_{+}=\{E\in\mathbb{H}\,|\,\bra{v}E\ket{v}\geqslant0~~\forall \ket{v}\in\mathcal{H}\}.
\end{equation}
A multipartite quantum state is expressed by a density operator $\rho$, that is, $\rho\in\mathbb{H}_{+}$ with $\Tr\rho=1$.
A measurement is represented by a positive operator-valued measure $\{M_{i}\}_{i}$, that is, $\{M_{i}\}_{i}\subseteq\mathbb{H}_{+}$ satisfying $\sum_{i}M_{i}=\mathbbm{1}$, where $\mathbbm{1}$ is the identity operator in $\mathbb{H}$. 
For the quantum state $\rho$, the probability of obtaining the measurement outcome corresponding to $M_{j}$ is $\Tr(\rho M_{j})$.
%%%%%%%%%%%%%%%%%%%%%%

%%%%%%%%%%%%%%%%%%%%%%%
\begin{figure}[!tt]
\centerline{\includegraphics*[bb=0 0 530 525,scale=0.40]{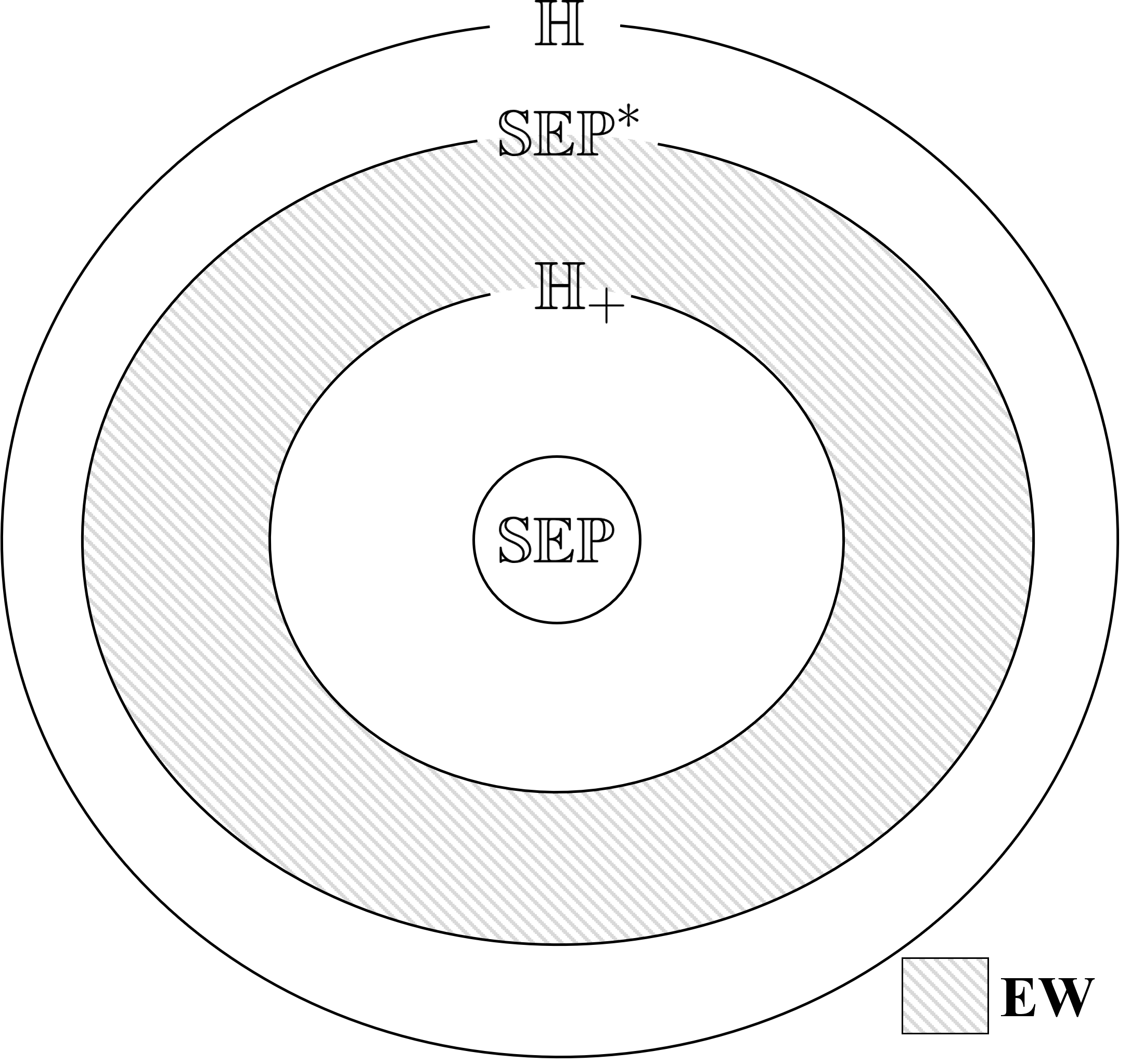}}
\caption{The relationship of the subsets of $\mathbb{H}$. 
The shaded area $\mathbb{SEP}^{*}\setminus\mathbb{H}_{+}$ is the set of all EWs.
}\label{fig:inc}
\end{figure}
%%%%%%%%%%%%%%%%%%%%%%%

%%%%%%%%%%%%%%%%%%%%%%
%     Subsection     %
%%%%%%%%%%%%%%%%%%%%%%
\subsection{Entanglement witness}
%%%%%%%%%%%%%%%%%%%%%%
\begin{definition}\label{def:sep}
$E\in\mathbb{H}_{+}$ is called \emph{separable} if it can represented as a sum of positive-semidefinite product operators, that is,
\begin{equation}\label{eq:sepod}
E=\sum_{l}\bigotimes_{k=1}^{m}E_{l,k},
\end{equation}
where $\{E_{l,k}\}_{l}$ is a set of positive-semidefinite operators acting on $\mathbb{C}^{d_{k}}$ of $\mathcal{H}$ for each $k=1,\ldots,m$.
Similarly, we say that the set $\{E_{i}\}_{i}\subseteq\mathbb{H}_{+}$ is \emph{separable} if $E_{i}$ is separable for all $i$.
\end{definition}
%%%%%%%%%%%%%%%%%%%%%%
\noindent We denote the set of all \emph{separable} operators in $\mathbb{H}_{+}$ as
\begin{equation}\label{eq:sepdef}
\mathbb{SEP}=\{E\in\mathbb{H}_{+}\,|\, E:\mbox{separable}\},
\end{equation}
and its dual set as $\mathbb{SEP}^{*}$, that is,
\begin{equation}\label{eq:sepsd}
\mathbb{SEP}^{*}=\{E\in\mathbb{H}\,|\,\Tr(EF)\geqslant0~\forall F\in\mathbb{SEP}\}.
\end{equation}
A Hermitian operator in $\mathbb{SEP}^{*}$ is also called \emph{block positive}.
Because $\mathbbm{1}\in\mathbb{SEP}$, we have
\begin{equation}\label{eq:trez}
\Tr E\geqslant0
\end{equation}
for all $E\in\mathbb{SEP}^{*}$,
where the equality holds if and only if $E$ is the zero operator in $\mathbb{H}$ \cite{ha2022}.
%%%%%%%%%%%%%%%%%%%%%%

%%%%%%%%%%%%%%%%%%%%%%
\begin{definition}\label{def:ew}
$W\in\mathbb{H}$ is called an \emph{EW}
if it is block positive but not positive semidefinite, that is,
\begin{equation}\label{eq:wsshp}
W\in\mathbb{SEP}^{*}\setminus\mathbb{H}_{+}.
\end{equation}
\end{definition}
%%%%%%%%%%%%%%%%%%%%%%
\noindent For each EW $W$, it is known that there exists an entangled operator $E\in\mathbb{H}_{+}\setminus\mathbb{SEP}$ satisfying
\begin{equation}\label{eq:ewep}
\Tr(WE)<0,
\end{equation} 
thus the term \emph{entanglement witness}.
\Fref{fig:inc} illustrates the relationship of the subsets of $\mathbb{H}$.
%%%%%%%%%%%%%%%%%%%%%%

%%%%%%%%%%%%%%%%%%%%%%
%     Subsection     %
%%%%%%%%%%%%%%%%%%%%%%
\subsection{Minimum-error discrimination of multipartite quantum states}
%%%%%%%%%%%%%%%%%%%%%%
For a \emph{multipartite} quantum state ensemble,
\begin{equation}\label{eq:ens}
\mathcal{E}=\{\eta_{i},\rho_{i}\}_{i=1}^{n},
\end{equation}
let us consider the situation that
the state $\rho_{i}$ is prepared with the probability $\eta_{i}$.
To guess the prepared state,
we use the decision rule in terms of
a measurement $\{M_{i}\}_{i=1}^{n}$ such that the detection of $M_{i}$ means that 
the prepared state is guessed to be $\rho_{i}$.
The average probability of correctly guessing the prepared state from $\mathcal{E}$ in Eq.~\eref{eq:ens} with respect to a measurement $\{M_{i}\}_{i=1}^{n}$ is
\begin{equation}\label{eq:apcg}
\sum_{i=1}^{n}\eta_{i}\Tr(\rho_{i}M_{i}).
\end{equation}
%%%%%%%%%%%%%%%%%%%%%%

%%%%%%%%%%%%%%%%%%%%%%
The \emph{minimum-error discrimination} of $\mathcal{E}$ is to achieve the optimal success probability,
\begin{equation}\label{eq:pgdef}
p_{\rm G}(\mathcal{E})=\max_{\rm Measurement}\sum_{i=1}^{n}\eta_{i}\Tr(\rho_{i}M_{i}),
\end{equation}
where the maximum is taken over all possible measurements \cite{hels1969}.
We note that a measurement $\{M_{i}\}_{i=1}^{n}$ provides the optimal success probability $p_{\rm G}(\mathcal{E})$ if and only if it satisfies the following condition \cite{hole1974,yuen1975,barn20092,bae2013},
\begin{equation}\label{eq:nscfme}
\sum_{j=1}^{n}\eta_{j}\rho_{j}M_{j}-\eta_{i}\rho_{i}
\in\mathbb{H}_{+}~\forall i=1,\ldots,n.
\end{equation}
%%%%%%%%%%%%%%%%%%%%%%

%%%%%%%%%%%%%%%%%%%%%%
A measurement $\{M_{i}\}_{i}$ is called a \emph{separable measurement} if $\{M_{i}\}_{i}\subseteq\mathbb{SEP}$, and a measurement is called a \emph{LOCC measurement} if it can be realized by LOCC. It is known that every LOCC measurement is a separable measurement \cite{chit20142}.
When the available measurements are limited to separable measurements, 
we denote the maximum success probability by
\begin{equation}\label{eq:pptdef}
p_{\rm SEP}(\mathcal{E})=\max_{\substack{\rm Separable\\ \rm measurement}}\sum_{i=1}^{n}\eta_{i}\Tr(\rho_{i}M_{i}),
\end{equation}
where the maximum is taken over all possible separable measurements.
Similarly, we denote 
\begin{equation}\label{eq:pldef}
p_{\rm L}(\mathcal{E})=\max_{\substack{\rm LOCC\\ \rm measurement}}\sum_{i=1}^{n}\eta_{i}\Tr(\rho_{i}M_{i}),
\end{equation}
where the maximum is taken over all possible LOCC measurements.
From the definitions, 
we trivially have
\begin{equation}\label{eq:plptpg}
p_{\rm L}(\mathcal{E})\leqslant p_{\rm SEP}(\mathcal{E})\leqslant p_{\rm G}(\mathcal{E}).
\end{equation}
%%%%%%%%%%%%%%%%%%%%%%

%%%%%%%%%%%%%%%%%%%%%%
%      Section       %
%%%%%%%%%%%%%%%%%%%%%%
\section{Separable measurements and quantum state discrimination}\label{sec:ppq}
%%%%%%%%%%%%%%%%%%%%%%
For a multipartite quantum state ensemble $\mathcal{E}$ in Eq.~\eref{eq:ens}, we define $\mathbb{H}_{\rm SEP}(\mathcal{E})$ as
\begin{equation}\label{eq:ses}
\mathbb{H}_{\rm SEP}(\mathcal{E})=\{H\in\mathbb{H}\,|\, 
H-\eta_{i}\rho_{i}\in\mathbb{SEP}^{*}~\forall i=1,\ldots,n\},
\end{equation}
where $\mathbb{SEP}^{*}$ is defined in Eq.~\eref{eq:sepsd}.
In other words, $\mathbb{H}_{\rm SEP}(\mathcal{E})$ is the set of all $H$ such that $H-\eta_{i}\rho_{i}$ is block positive in $\mathbb{H}$, for all $i=1,\ldots,n$. 
We further define
\begin{equation}\label{eq:hppte}
\mathbb{H}_{\rm EW}(\mathcal{E})=
\{H\in\mathbb{H}_{\rm SEP}(\mathcal{E})\,|\,H-\eta_{j}\rho_{j}\notin\mathbb{H}_{+}~\mbox{for some}~j\in\{1,\ldots,n\}\},
\end{equation}
that is, $\mathbb{H}_{\rm EW}(\mathcal{E})$ is the set of all $H\in\mathbb{H}_{\rm SEP}(\mathcal{E})$ such that $H-\eta_{j}\rho_{j}$ is an EW for some $j\in\{1,\ldots,n\}$.
From Definition~\ref{def:ew}, we can see that 
\begin{equation}\label{eq:psmd}
H\in\mathbb{H}_{\rm SEP}(\mathcal{E})\setminus\mathbb{H}_{\rm EW}(\mathcal{E})
\end{equation}
if and only if
\begin{equation}\label{eq:eqhpt}
H-\eta_{i}\rho_{i}\in\mathbb{H}_{+}~\forall i=1,\ldots,n.
\end{equation}
%%%%%%%%%%%%%%%%%%%%%%

%%%%%%%%%%%%%%%%%%%%%%
Now, let us consider the minimum quantity
\begin{equation}\label{eq:qptdef}
q_{\rm SEP}(\mathcal{E})=\min_{H\in\mathbb{H}_{\rm SEP}(\mathcal{E})}\Tr H,
\end{equation}
which is an upper bound of $p_{\rm SEP}(\mathcal{E})$\cite{band2015}, that is,
\begin{equation}\label{eq:pptq}
p_{\rm SEP}(\mathcal{E})\leqslant q_{\rm SEP}(\mathcal{E}).
\end{equation}
The following theorem shows that 
$p_{\rm SEP}(\mathcal{E})$ in Eq.~\eref{eq:pptdef}
is equal to $q_{\rm SEP}(\mathcal{E})$ in Eq.~\eref{eq:qptdef}. 
%%%%%%%%%%%%%%%%%%%%%%

%%%%%%%%%%%%%%%%%%%%%%
%      Theorem       %
%%%%%%%%%%%%%%%%%%%%%%
\begin{theorem}\label{thm:pptq}
For a multipartite quantum state ensemble $\mathcal{E}=\{\eta_{i},\rho_{i}\}_{i=1}^{n}$, 
\begin{equation}\label{eq:ubppt}
p_{\rm SEP}(\mathcal{E})= q_{\rm SEP}(\mathcal{E}).
\end{equation}
\end{theorem}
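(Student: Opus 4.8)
The plan is to prove the two inequalities $p_{\rm SEP}(\mathcal{E})\leqslant q_{\rm SEP}(\mathcal{E})$ and $p_{\rm SEP}(\mathcal{E})\geqslant q_{\rm SEP}(\mathcal{E})$ separately; the first is already recorded as Eq.~\eref{eq:pptq}, so the real work is the reverse direction, which I would approach by convex duality (an explicit primal--dual / complementary-slackness argument rather than invoking a black-box LP/SDP duality theorem, since the cone $\mathbb{SEP}$ is not a spectrahedral cone).

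First I would set up the primal problem: maximizing $\sum_{i}\eta_{i}\Tr(\rho_{i}M_{i})$ over $\{M_{i}\}_{i=1}^{n}\subseteq\mathbb{SEP}$ with $\sum_{i}M_{i}=\mathbbm{1}$. For any feasible separable measurement $\{M_{i}\}$ and any $H\in\mathbb{H}_{\rm SEP}(\mathcal{E})$, each difference $H-\eta_{i}\rho_{i}$ lies in $\mathbb{SEP}^{*}$ and each $M_{i}\in\mathbb{SEP}$, so $\Tr\big((H-\eta_{i}\rho_{i})M_{i}\big)\geqslant0$ by the definition \eref{eq:sepsd}; summing over $i$ and using $\sum_{i}M_{i}=\mathbbm{1}$ gives $\Tr H\geqslant\sum_{i}\eta_{i}\Tr(\rho_{i}M_{i})$, which re-derives \eref{eq:pptq}. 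The point of writing this out is that equality forces $\Tr\big((H-\eta_{i}\rho_{i})M_{i}\big)=0$ for every $i$ — the complementary-slackness condition I will want to realize.

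The core step is to produce an optimal separable measurement achieving $q_{\rm SEP}(\mathcal{E})$. Here I would appeal to strong duality: the feasible set of separable measurements is nonempty and compact and the objective is linear, so $p_{\rm SEP}(\mathcal{E})$ is attained by some $\{M_{i}^{\star}\}$; on the dual side, $\mathbb{H}_{\rm SEP}(\mathcal{E})$ is closed and the functional $\Tr H$ is bounded below on it (by $p_{\rm SEP}(\mathcal{E})$, just shown), and one checks the minimum in \eref{eq:qptdef} is attained, say at $H^{\star}$. To close the gap I would construct, from an optimal $\{M_{i}^{\star}\}$, the operator $H^{\star}:=\sum_{j}\eta_{j}\rho_{j}M_{j}^{\star}$ (symmetrized if necessary to lie in $\mathbb{H}$), and verify that $H^{\star}-\eta_{i}\rho_{i}\in\mathbb{SEP}^{*}$ for all $i$ — i.e.\ that $H^{\star}\in\mathbb{H}_{\rm SEP}(\mathcal{E})$ — using a separating-hyperplane / Lagrangian stationarity argument at the optimum $\{M_{i}^{\star}\}$ of the concave maximization over the convex cone $\mathbb{SEP}$ intersected with the affine constraint $\sum_i M_i=\mathbbm{1}$. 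Once $H^{\star}\in\mathbb{H}_{\rm SEP}(\mathcal{E})$ is established, $\Tr H^{\star}=\sum_{j}\eta_{j}\Tr(\rho_{j}M_{j}^{\star})=p_{\rm SEP}(\mathcal{E})\geqslant q_{\rm SEP}(\mathcal{E})$, and combined with \eref{eq:pptq} this yields \eref{eq:ubppt}.

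The main obstacle I anticipate is the verification that the Lagrange-multiplier operator $H^{\star}$ built from an optimal separable measurement is genuinely block positive with respect to \emph{each} constraint $H^{\star}-\eta_i\rho_i$, rather than merely satisfying some averaged condition; this is where the structure of the cone $\mathbb{SEP}$ and its dual $\mathbb{SEP}^{*}$ must be used carefully, together with the stationarity conditions from optimality of $\{M_i^{\star}\}$ (analogous to the Holevo--Yuen conditions \eref{eq:nscfme} but with $\mathbb{H}_{+}$ replaced by $\mathbb{SEP}^{*}$, and $M_i^{\star}\in\mathbb{SEP}$). A clean alternative, if the direct construction is awkward, is to invoke conic LP strong duality after checking a Slater-type condition — e.g.\ the strictly feasible primal point $M_i=\tfrac1n\mathbbm{1}$ shows the primal is strictly feasible, forcing zero duality gap and attainment on the dual side — and then identify the resulting dual problem with $\min_{H\in\mathbb{H}_{\rm SEP}(\mathcal{E})}\Tr H$; the bulk of the effort is then bookkeeping to match the abstract dual with the concrete set \eref{eq:ses}.
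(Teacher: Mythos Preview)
Your fallback route --- checking Slater's condition via $M_i=\tfrac{1}{n}\mathbbm{1}$ and invoking strong conic duality to identify the dual of the primal separable-measurement problem with $\min_{H\in\mathbb{H}_{\rm SEP}(\mathcal{E})}\Tr H$ --- is sound and does yield \eref{eq:ubppt}. The paper's own proof is essentially a self-contained derivation of exactly that strong-duality statement: it builds the convex set $\mathcal{S}(\mathcal{E})\subset\mathbb{R}\times\mathbb{H}$ of pairs $\big(\sum_i\eta_i\Tr(\rho_iM_i)-p,\ \mathbbm{1}-\sum_iM_i\big)$ with $p>p_{\rm SEP}(\mathcal{E})$ and $M_i\in\mathbb{SEP}$, separates it from the origin by a hyperplane $(\gamma,\Gamma)$, and then argues directly that $\gamma>0$ and that $H:=\Gamma/\gamma$ is dual-feasible with $\Tr H\leqslant p_{\rm SEP}(\mathcal{E})$. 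So your second approach and the paper's agree in substance, differing only in whether one cites a packaged conic-duality theorem or reproves it by hand via a separating hyperplane.

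Your \emph{primary} approach, however, has a genuine gap. You set $H^{\star}:=\sum_{j}\eta_{j}\rho_{j}M_{j}^{\star}$ and hope to verify $H^{\star}-\eta_{i}\rho_{i}\in\mathbb{SEP}^{*}$ for every $i$ by a stationarity argument ``analogous to the Holevo--Yuen conditions \eref{eq:nscfme} but with $\mathbb{H}_{+}$ replaced by $\mathbb{SEP}^{*}$''. That analogy breaks. The KKT/Lagrangian analysis of the separable problem produces a dual multiplier $H$ for the constraint $\sum_iM_i=\mathbbm{1}$ satisfying $H-\eta_i\rho_i\in\mathbb{SEP}^*$ and the complementary-slackness relations $\Tr\big[(H-\eta_i\rho_i)M_i^{\star}\big]=0$; it does \emph{not} identify this $H$ with $\sum_j\eta_j\rho_jM_j^{\star}$. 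In the unrestricted problem that identification succeeds because $\Tr\big[(H-\eta_i\rho_i)M_i^{\star}\big]=0$ with both factors in $\mathbb{H}_{+}$ forces the operator product $(H-\eta_i\rho_i)M_i^{\star}=0$, and summing over $i$ gives $H=\sum_j\eta_j\rho_jM_j^{\star}$. When $H-\eta_i\rho_i$ is merely block positive that implication fails --- vanishing trace pairing no longer forces vanishing product --- so your candidate $H^{\star}$ need not be Hermitian (your parenthetical ``symmetrized if necessary'' does not fix this), let alone lie in $\mathbb{H}_{\rm SEP}(\mathcal{E})$. Indeed the paper treats the inclusion $\sum_j\eta_j\rho_jM_j\in\mathbb{H}_{\rm SEP}(\mathcal{E})$ as a \emph{hypothesis} (Theorem~\ref{thm:qmsc}) rather than as a consequence of optimality. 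Abandon this construction and run the Slater/conic-duality argument directly; the ``bookkeeping to match the abstract dual with the concrete set \eref{eq:ses}'' that you anticipate is precisely what the paper's separating-hyperplane calculation carries out.
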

%%%%%%%%%%%%%%%%%%%%%%

%%%%%%%%%%%%%%%%%%%%%%
\begin{proof}
As we already have Inequality~\eref{eq:pptq}, it is enough to show that 
\begin{equation}\label{eq:pqgt}
p_{\rm SEP}(\mathcal{E})\geqslant q_{\rm SEP}(\mathcal{E}).
\end{equation}
Let us consider the set,
\begin{equation}\label{eq:ase}
\mathcal{S}(\mathcal{E})=
\Big\{\big(\sum_{i=1}^{n}\eta_{i}\Tr(\rho_{i}M_{i})-p,\mathbbm{1}-\sum_{i=1}^{n}M_{i}\big)\in\mathbb{R}\times\mathbb{H}\,\Big|\,p>p_{\rm SEP}(\mathcal{E}),~
M_{i}\in\mathbb{SEP}~\forall i=1,\ldots,n\Big\},
\end{equation}
where $\mathbb{R}$ is the set of all real numbers.
We note that $\mathcal{S}(\mathcal{E})$ is 
a convex set due to the convexity of $\mathbb{SEP}$ in Eq.~\eref{eq:sepdef}.
Moreover, $\mathcal{S}(\mathcal{E})$ does not have the origin $(0,0_{\mathbb{H}})$ of $\mathbb{R}\times\mathbb{H}$ 
otherwise there is a separable measurement $\{M_{i}\}_{i=1}^{n}$ with 
\begin{equation}\label{eq:inmg}
\sum_{i=1}^{n}\eta_{i}\Tr(\rho_{i}M_{i})>p_{\rm SEP}(\mathcal{E}), 
\end{equation}
and this contradicts the optimality of $p_{\rm SEP}(\mathcal{E})$ in Eq.~\eref{eq:pptdef}.
Here, $0_{\mathbb{H}}$ is the zero operator in $\mathbb{H}$.
We also note that the Cartesian product $\mathbb{R}\times\mathbb{H}$
can be considered as a real vector space with an inner product defined as
\begin{equation}\label{eq:inpde}
\langle (a,A),(b,B)\rangle=ab+\Tr(AB)
\end{equation}
for $(a,A),(b,B)\in\mathbb{R}\times\mathbb{H}$.
%%%%%%%%%%%%%%%%%%%%%%

%%%%%%%%%%%%%%%%%%%%%%
Since $\mathcal{S}(\mathcal{E})$ and the single-element set $\{(0,0_{\mathbb{H}})\}$ are disjoint convex sets,
it follows from separating hyperplane theorem\cite{boyd2004,sht} that 
there is $(\gamma,\Gamma)\in\mathbb{R}\times\mathbb{H}$ satisfying
\begin{gather}
(\gamma,\Gamma)\neq(0,0_{\mathbb{H}}),\label{eq:gcgneq}\\
\langle(\gamma,\Gamma),(r,G)\rangle\leqslant0~~\forall(r,G)\in\mathcal{S}(\mathcal{E}).
\label{eq:rcgleq}
\end{gather}
%%%%%%%%%%%%%%%%%%%%%%

%%%%%%%%%%%%%%%%%%%%%%
\indent Suppose 
\begin{gather}
\Tr\Gamma\leqslant\gamma p_{\rm SEP}(\mathcal{E}),\label{eq:sca1}\\
\Gamma-\gamma\eta_{i}\rho_{i}\in\mathbb{SEP}^{*}~\forall i=1,\ldots,n,\label{eq:sca2}\\
\gamma>0.\label{eq:sca3}
\end{gather}
From Conditions~\eref{eq:sca2} and \eref{eq:sca3}, the Hermitian operator
$H=\Gamma/\gamma$ is an element of $\mathbb{H}_{\rm SEP}(\mathcal{E})$ in Eq.~\eref{eq:ses}.
Thus, the definition of $q_{\rm SEP}(\mathcal{E})$ in Eq.~\eref{eq:qptdef} leads us to
\begin{equation}\label{eq:qptlet}
q_{\rm SEP}(\mathcal{E})\leqslant\Tr H.
\end{equation}
Moreover, Condition~\eref{eq:sca1} implies
\begin{equation}\label{eq:thleqp}
\Tr H\leqslant p_{\rm SEP}(\mathcal{E}).
\end{equation}
Inequalities~\eref{eq:qptlet} and \eref{eq:thleqp} imply Inequality~\eref{eq:pqgt}.
%%%%%%%%%%%%%%%%%%%%%%

%%%%%%%%%%%%%%%%%%%%%%
To complete the proof, we show the validity of
\eref{eq:sca1}, \eref{eq:sca2} and \eref{eq:sca3}.
%%%%%%%%%%%%%%%%%%%%%%

%%%%%%%%%%%%%%%%%%%%%%
\noindent\textit{Proof of \eref{eq:sca1}.}
From Eq.~\eref{eq:inpde},
Inequality~\eref{eq:rcgleq} can be rewritten as
\begin{equation}\label{eq:trgmle}
\Tr \Gamma-\sum_{i=1}^{n}\Tr[M_{i}(\Gamma-\gamma\eta_{i}\rho_{i})]\leqslant \gamma p 
\end{equation}
for all $p>p_{\rm SEP}(\mathcal{E})$ and all $\{M_{i}\}_{i=1}^{n}\subseteq\mathbb{SEP}$. 
If $M_{i}=0_{\mathbb{H}}$ for all $i=1,\ldots,n$,  
Inequality~\eref{eq:trgmle} becomes Inequality~\eref{eq:sca1}
by taking the limit of $p$ to $p_{\rm SEP}(\mathcal{E})$.
%%%%%%%%%%%%%%%%%%%%%%

%%%%%%%%%%%%%%%%%%%%%%
\noindent\textit{Proof of \eref{eq:sca2}.}
For each $j\in\{1,\ldots,n\}$,
let us consider an arbitrary $M_{j}\in\mathbb{SEP}$ 
and $M_{i}=0_{\mathbb{H}}$ for all $i=1,\ldots,n$ with $i\neq j$. 
In this case,
$\{M_{i}\}_{i=1}^{n}$ is clearly a subset of $\mathbb{SEP}$,
and Inequality~\eref{eq:trgmle} becomes
\begin{equation}\label{eq:trmjle}
\Tr\Gamma-\Tr[M_{j}(\Gamma-\gamma\eta_{j}\rho_{j})]\leqslant\gamma p_{\rm SEP}(\mathcal{E})
\end{equation}
by taking the limit of $p$ to $p_{\rm SEP}(\mathcal{E})$.
%%%%%%%%%%%%%%%%%%%%%%

%%%%%%%%%%%%%%%%%%%%%%
Suppose $\Gamma-\gamma\eta_{j}\rho_{j}\notin\mathbb{SEP}^{*}$,
then there is $M\in\mathbb{SEP}$ with $\Tr[M(\Gamma-\gamma\eta_{j}\rho_{j})]<0$. 
We note that $M\in\mathbb{SEP}$ implies $tM\in\mathbb{SEP}$ for any $t>0$. 
Thus, $\{M_{i}\}_{i=1}^{n}$ consisting of 
$M_{j}=tM$ for $t>0$ and $M_{i}=0$ for all $i=1,\ldots,n$ with $i\neq j$ 
is also a subset of $\mathbb{SEP}$.
%%%%%%%%%%%%%%%%%%%%%%

%%%%%%%%%%%%%%%%%%%%%%
Now, Inequality~\eref{eq:trmjle}
can be rewritten as
\begin{equation}\label{eq:trtmle}
\Tr\Gamma-\Tr[tM(\Gamma-\gamma\eta_{j}\rho_{j})]\leqslant\gamma p_{\rm SEP}(\mathcal{E}).
\end{equation}
Since Inequality~\eref{eq:trtmle} is true for arbitrary large $t>0$, 
$\gamma p_{\rm SEP}(\mathcal{E})$ can also be arbitrary large.
However, this contradicts that both  $\gamma$ and $p_{\rm SEP}(\mathcal{E})$ are finite.
Thus, $\Gamma-\gamma\eta_{j}\rho_{j}\in\mathbb{SEP}^{*}$, 
which completes the proof of \eref{eq:sca2}.
%%%%%%%%%%%%%%%%%%%%%%

%%%%%%%%%%%%%%%%%%%%%%
\noindent\textit{Proof of \eref{eq:sca3}.}
Suppose $\gamma<0$ and consider $\{M_{i}\}_{i=1}^{n}$ with
$M_{i}=0_{\mathbb{H}}$ for all $i=1,\ldots,n$.
Since $\{M_{i}\}_{i=1}^{n}\subseteq\mathbb{SEP}$,
Inequality~\eref{eq:trgmle} becomes
\begin{equation}\label{eq:tgmi}
\Tr\Gamma\leqslant-\infty
\end{equation}
by taking the limit of $p$ to $\infty$.
This contradicts that $\Gamma$ is bounded, therefore $\gamma\geqslant0$.
%%%%%%%%%%%%%%%%%%%%%%

%%%%%%%%%%%%%%%%%%%%%%
Now, let us suppose $\gamma=0$.
In this case, Conditions \eref{eq:sca1} and \eref{eq:sca2} become
\begin{equation}\label{eq:tglpp}
\Tr\Gamma\leqslant0,~~\Gamma\in\mathbb{SEP}^{*}.
\end{equation}
From Condition~\eref{eq:tglpp} and the argument with Inequality~\eref{eq:trez}, we have 
\begin{equation}\label{eq:gzmh}
\Gamma=0_{\mathbb{H}},
\end{equation}
which contradicts Condition~\eref{eq:gcgneq}. Thus, $\gamma>0$.
\end{proof}
%%%%%%%%%%%%%%%%%%%%%%

%%%%%%%%%%%%%%%%%%%%%%
\indent For a given ensemble $\mathcal{E}=\{\eta_{i},\rho_{i}\}_{i=1}^{n}$, the following theorem provides a necessary and sufficient condition for a separable measurement $\{M_{i}\}_{i=1}^{n}$ and $H\in\mathbb{H}_{\rm SEP}(\mathcal{E})$ to realize $p_{\rm SEP}(\mathcal{E})$ and $q_{\rm SEP}(\mathcal{E})$, respectively.
%%%%%%%%%%%%%%%%%%%%%%

%%%%%%%%%%%%%%%%%%%%%%
%      Theorem       %
%%%%%%%%%%%%%%%%%%%%%%
\begin{theorem}\label{thm:mnsc}
For a multipartite quantum state ensemble $\mathcal{E}=\{\eta_{i},\rho_{i}\}_{i=1}^{n}$, a separable measurement $\{M_{i}\}_{i=1}^{n}$ and $H\in\mathbb{H}_{\rm SEP}(\mathcal{E})$, $\{M_{i}\}_{i=1}^{n}$ realizes $p_{\rm SEP}(\mathcal{E})$ and $H$ provides $q_{\rm SEP}(\mathcal{E})$ if and only if
\begin{equation}\label{eq:comc}
\Tr[M_{i}(H-\eta_{i}\rho_{i})]=0~~\forall i=1,\ldots,n.
\end{equation}
\end{theorem}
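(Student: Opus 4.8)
The plan is to prove this via a chain of inequalities that are forced to be tight, using Theorem~\ref{thm:pptq} as the crucial bridge. First I would establish the key algebraic identity connecting the two objective functions: for any separable measurement $\{M_i\}_{i=1}^{n}$ (so that $\sum_i M_i = \mathbbm{1}$) and any $H\in\mathbb{H}$,
\begin{equation}\label{eq:keyid}
\Tr H - \sum_{i=1}^{n}\eta_i\Tr(\rho_i M_i) = \sum_{i=1}^{n}\Tr[M_i(H-\eta_i\rho_i)],
\end{equation}
which follows immediately from $\Tr H = \Tr[(\sum_i M_i)H] = \sum_i \Tr(M_i H)$. Next, observe that when $H\in\mathbb{H}_{\rm SEP}(\mathcal{E})$, each summand on the right of \eref{eq:keyid} is nonnegative: $H-\eta_i\rho_i\in\mathbb{SEP}^{*}$ by definition of $\mathbb{H}_{\rm SEP}(\mathcal{E})$ in \eref{eq:ses}, while $M_i\in\mathbb{SEP}$ since the measurement is separable, so $\Tr[M_i(H-\eta_i\rho_i)]\geqslant0$ by the defining property \eref{eq:sepsd} of the dual set.

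With this in hand, the proof is essentially a "sandwich" argument. For the forward direction, suppose $\{M_i\}_{i=1}^{n}$ realizes $p_{\rm SEP}(\mathcal{E})$ and $H$ provides $q_{\rm SEP}(\mathcal{E})$. Then using \eref{eq:keyid} and Theorem~\ref{thm:pptq},
\begin{equation}\label{eq:sandwich}
\sum_{i=1}^{n}\Tr[M_i(H-\eta_i\rho_i)] = \Tr H - \sum_{i=1}^{n}\eta_i\Tr(\rho_i M_i) = q_{\rm SEP}(\mathcal{E}) - p_{\rm SEP}(\mathcal{E}) = 0.
\end{equation}
Since this is a sum of nonnegative terms equal to zero, each term vanishes, giving \eref{eq:comc}. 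For the converse, suppose $\{M_i\}_{i=1}^{n}$ is a separable measurement, $H\in\mathbb{H}_{\rm SEP}(\mathcal{E})$, and \eref{eq:comc} holds. Then \eref{eq:keyid} gives $\Tr H = \sum_i\eta_i\Tr(\rho_i M_i)$. Now chain the inequalities: $\sum_i\eta_i\Tr(\rho_i M_i)\leqslant p_{\rm SEP}(\mathcal{E})$ because $\{M_i\}$ is a feasible separable measurement in \eref{eq:pptdef}; $p_{\rm SEP}(\mathcal{E}) = q_{\rm SEP}(\mathcal{E})$ by Theorem~\ref{thm:pptq}; and $q_{\rm SEP}(\mathcal{E})\leqslant\Tr H$ because $H$ is feasible in \eref{eq:qptdef}. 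The two ends of this chain coincide, forcing equality throughout, so $\sum_i\eta_i\Tr(\rho_i M_i)=p_{\rm SEP}(\mathcal{E})$ and $\Tr H = q_{\rm SEP}(\mathcal{E})$, as required.

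I do not expect a serious obstacle here, since the heavy lifting (the separating-hyperplane argument establishing $p_{\rm SEP}=q_{\rm SEP}$) is already done in Theorem~\ref{thm:pptq}; this theorem is a complementary-slackness-type corollary. The only point requiring a little care is making sure the nonnegativity of each $\Tr[M_i(H-\eta_i\rho_i)]$ is correctly invoked — it uses $M_i\in\mathbb{SEP}$ and $H-\eta_i\rho_i\in\mathbb{SEP}^{*}$ together with \eref{eq:sepsd} — and noting that in the forward direction one genuinely needs \emph{both} hypotheses (optimality of the measurement \emph{and} optimality of $H$) to get the difference of the two optimal values to be exactly zero rather than merely nonpositive. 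Everything else is the identity \eref{eq:keyid} plus bookkeeping with the inequalities \eref{eq:pptq} and the definitions.
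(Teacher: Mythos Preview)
Your proposal is correct and follows essentially the same approach as the paper: both directions rest on the identity $\sum_i\Tr[M_i(H-\eta_i\rho_i)]=\Tr H-\sum_i\eta_i\Tr(\rho_i M_i)$ from $\sum_i M_i=\mathbbm{1}$, the nonnegativity of each summand via $M_i\in\mathbb{SEP}$ and $H-\eta_i\rho_i\in\mathbb{SEP}^{*}$, and the equality $p_{\rm SEP}(\mathcal{E})=q_{\rm SEP}(\mathcal{E})$ from Theorem~\ref{thm:pptq}. The paper's presentation of the converse writes the chain as a single displayed string of (in)equalities, but the logic is identical to yours.
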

%%%%%%%%%%%%%%%%%%%%%%

%%%%%%%%%%%%%%%%%%%%%%
\begin{proof}
Let us suppose that $\{M\}_{i=1}^{n}$ and $H$ 
give $p_{\rm SEP}(\mathcal{E})$ and $q_{\rm SEP}(\mathcal{E})$, respectively.
Because $M_{i}\in\mathbb{SEP}$ and $H-\eta_{i}\rho_{i}\in\mathbb{SEP}^{*}$ for all $i=1,\ldots,n$, we have
\begin{equation}\label{eq:mihige}
\Tr[M_{i}(H-\eta_{i}\rho_{i})]\geqslant0~\forall i=1,\ldots,n.
\end{equation}
Moreover, we have
\begin{equation}\label{eq:stmh}
\sum_{i=1}^{n}\Tr[M_{i}(H-\eta_{i}\rho_{i})]
=\Tr H-\sum_{i=1}^{n}\eta_{i}\Tr(\rho_{i}M_{i})
=q_{\rm SEP}(\mathcal{E})-p_{\rm SEP}(\mathcal{E})=0,
\end{equation}
where the first equality is from $\sum_{i=1}^{n}M_{i}=\mathbbm{1}$,
the second equality is due to the assumption of $H$ and $\{M_{i}\}_{i=1}^{n}$, 
and the last equality is by Theorem~\ref{thm:pptq}. 
Inequality~\eref{eq:mihige} and Eq.~\eref{eq:stmh}
lead us to Condition~\eref{eq:comc}.
%%%%%%%%%%%%%%%%%%%%%%

%%%%%%%%%%%%%%%%%%%%%%
Conversely, let us assume that 
$\{M_{i}\}_{i=1}^{n}$ and $H$ satisfy Condition~\eref{eq:comc}.
This assumption implies
\begin{equation}\label{eq:qpthq}
q_{\rm SEP}(\mathcal{E})
=p_{\rm SEP}(\mathcal{E})\geqslant\sum_{i=1}^{n}\eta_{i}\mathrm{Tr}(\rho_{i}M_{i})
=\sum_{i=1}^{n}\eta_{i}\mathrm{Tr}(\rho_{i}M_{i})
+\sum_{i=1}^{n}\mathrm{Tr}[M_{i}(H-\eta_{i}\rho_{i})]
=\mathrm{Tr}H
\geqslant q_{\rm SEP}(\mathcal{E}),
\end{equation}
where the first equality is by Theorem~\ref{thm:pptq},
the second equality is from Condition~\eref{eq:comc},
the last equality follows from $\sum_{i=1}^{n}M_{i}=\mathbbm{1}$, 
and the first and second inequalities are due to
the definitions of $p_{\rm SEP}(\mathcal{E})$ and $q_{\rm SEP}(\mathcal{E})$, respectively.
Inequality~\eref{eq:qpthq} leads us to 
\begin{equation}\label{eq:semp}
\sum_{i=1}^{n}\eta_{i}\mathrm{Tr}(\rho_{i}M_{i})=p_{\rm SEP}(\mathcal{E}),~~
\Tr H=q_{\rm SEP}(\mathcal{E}). 
\end{equation}
Thus, $\{M_{i}\}_{i=1}^{n}$ and $H$ provide $p_{\rm SEP}(\mathcal{E})$ and $q_{\rm SEP}(\mathcal{E})$, respectively.
\end{proof}
%%%%%%%%%%%%%%%%%%%%%%

%%%%%%%%%%%%%%%%%%%%%%
We note that $H\in\mathbb{H}_{\rm SEP}(\mathcal{E})$ providing $q_{\rm SEP}(\mathcal{E})$ is generally not unique (see Example~2 in \Sref{subsec:nsc}). 
However, the following corollary states the case that 
$H\in\mathbb{H}_{\rm SEP}(\mathcal{E})$ providing $q_{\rm SEP}(\mathcal{E})$ is unique.
%%%%%%%%%%%%%%%%%%%%%%

%%%%%%%%%%%%%%%%%%%%%%
%     Corollary      %
%%%%%%%%%%%%%%%%%%%%%%
\begin{corollary}\label{cor:exer1}
For a multipartite quantum state ensemble $\mathcal{E}=\{\eta_{i},\rho_{i}\}_{i=1}^{n}$, we have
\begin{equation}\label{eq:ppte1}
p_{\rm SEP}(\mathcal{E})=\eta_{1},
\end{equation}
if and only if 
\begin{equation}\label{eq:exppt}
\eta_{1}\rho_{1}-\eta_{i}\rho_{i}\in\mathbb{SEP}^{*}~~\forall i=2,\ldots,n.
\end{equation}
In this case, $\eta_{1}\rho_{1}$ is the only element of $\mathbb{H}_{\rm SEP}(\mathcal{E})$ providing $q_{\rm SEP}(\mathcal{E})$.
\end{corollary}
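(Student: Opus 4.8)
The plan is to establish the equivalence and the uniqueness assertion using only Theorem~\ref{thm:pptq} and the fact, recorded right after Inequality~\eref{eq:trez}, that every $E\in\mathbb{SEP}^{*}$ satisfies $\Tr E\geqslant0$ with equality precisely when $E=0_{\mathbb{H}}$.

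First I would prove the ``if'' direction. Assuming \eref{eq:exppt}, the operator $\eta_{1}\rho_{1}$ lies in $\mathbb{H}_{\rm SEP}(\mathcal{E})$, because $\eta_{1}\rho_{1}-\eta_{1}\rho_{1}=0_{\mathbb{H}}\in\mathbb{SEP}^{*}$ and $\eta_{1}\rho_{1}-\eta_{i}\rho_{i}\in\mathbb{SEP}^{*}$ for $i=2,\ldots,n$ is exactly the hypothesis; hence $q_{\rm SEP}(\mathcal{E})\leqslant\Tr(\eta_{1}\rho_{1})=\eta_{1}$ by \eref{eq:qptdef}. On the other hand, $\{M_{i}\}_{i=1}^{n}$ with $M_{1}=\mathbbm{1}$ and $M_{i}=0_{\mathbb{H}}$ for $i\geqslant2$ is a separable measurement (both $\mathbbm{1}$ and $0_{\mathbb{H}}$ are separable) with success probability $\eta_{1}\Tr\rho_{1}=\eta_{1}$, so $p_{\rm SEP}(\mathcal{E})\geqslant\eta_{1}$. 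Theorem~\ref{thm:pptq} then yields $\eta_{1}\leqslant p_{\rm SEP}(\mathcal{E})=q_{\rm SEP}(\mathcal{E})\leqslant\eta_{1}$, i.e. \eref{eq:ppte1}.

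For the remaining implication and the uniqueness claim, I would assume \eref{eq:ppte1}, so that $q_{\rm SEP}(\mathcal{E})=\eta_{1}$ by Theorem~\ref{thm:pptq}, and take any $H\in\mathbb{H}_{\rm SEP}(\mathcal{E})$ with $\Tr H=\eta_{1}$; at least one such $H$ exists because the proof of Theorem~\ref{thm:pptq} constructs an element of $\mathbb{H}_{\rm SEP}(\mathcal{E})$ attaining the minimum in \eref{eq:qptdef}. Since $H\in\mathbb{H}_{\rm SEP}(\mathcal{E})$ gives $H-\eta_{1}\rho_{1}\in\mathbb{SEP}^{*}$, while $\Tr(H-\eta_{1}\rho_{1})=\Tr H-\eta_{1}=0$, the trace characterization above forces $H-\eta_{1}\rho_{1}=0_{\mathbb{H}}$, i.e. $H=\eta_{1}\rho_{1}$. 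This simultaneously shows that $\eta_{1}\rho_{1}$ is the unique element of $\mathbb{H}_{\rm SEP}(\mathcal{E})$ providing $q_{\rm SEP}(\mathcal{E})$ and that $\eta_{1}\rho_{1}-\eta_{i}\rho_{i}=H-\eta_{i}\rho_{i}\in\mathbb{SEP}^{*}$ for all $i$, which includes \eref{eq:exppt}.

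The only delicate point is the use of attainment of the minimum defining $q_{\rm SEP}(\mathcal{E})$, which I regard as the main (and rather mild) obstacle; it is handled by invoking the explicit minimizer built in the proof of Theorem~\ref{thm:pptq}. Should one prefer to avoid this, the implication \eref{eq:ppte1}$\Rightarrow$\eref{eq:exppt} can instead be obtained by perturbation: for any $j\in\{2,\ldots,n\}$ and $F\in\mathbb{SEP}$, the separable measurement $M_{1}=\mathbbm{1}-sF$, $M_{j}=sF$, $M_{i}=0_{\mathbb{H}}$ otherwise (legitimate for small $s>0$, since $\mathbbm{1}$ is an interior point of $\mathbb{SEP}$) has success probability $\eta_{1}-s\Tr[(\eta_{1}\rho_{1}-\eta_{j}\rho_{j})F]$, which cannot exceed $p_{\rm SEP}(\mathcal{E})=\eta_{1}$, forcing $\Tr[(\eta_{1}\rho_{1}-\eta_{j}\rho_{j})F]\geqslant0$ and hence $\eta_{1}\rho_{1}-\eta_{j}\rho_{j}\in\mathbb{SEP}^{*}$.
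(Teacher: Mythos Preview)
Your proof is correct and close in spirit to the paper's, but it takes a slightly different route. The paper uses Theorem~\ref{thm:mnsc} (the complementary-slackness condition) in both directions: for the ``if'' part it checks that the trivial measurement and $H=\eta_{1}\rho_{1}$ satisfy Condition~\eref{eq:comc}, and for the ``only if'' part it applies Theorem~\ref{thm:mnsc} to the trivial measurement and an optimal $H$ to obtain $\Tr[M_{1}(H-\eta_{1}\rho_{1})]=0$, which with $M_{1}=\mathbbm{1}$ gives $\Tr(H-\eta_{1}\rho_{1})=0$. You bypass Theorem~\ref{thm:mnsc} entirely: for the ``if'' part you squeeze $p_{\rm SEP}(\mathcal{E})=q_{\rm SEP}(\mathcal{E})$ between $\eta_{1}$ and $\eta_{1}$ using only Theorem~\ref{thm:pptq}, and for the ``only if'' part you read off $\Tr(H-\eta_{1}\rho_{1})=0$ directly from $\Tr H=q_{\rm SEP}(\mathcal{E})=\eta_{1}$. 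Your approach is marginally more elementary, needing only strong duality and the trace rigidity of $\mathbb{SEP}^{*}$; the paper's approach has the virtue of exhibiting the corollary as a direct specialization of the general optimality criterion in Theorem~\ref{thm:mnsc}. Your remark about attainment of the minimum is apt (the paper also tacitly uses it), and your perturbation alternative is a valid independent argument, though not needed here.
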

%%%%%%%%%%%%%%%%%%%%%%

%%%%%%%%%%%%%%%%%%%%%%
\begin{proof}
Let $\{M_{i}\}_{i=1}^{n}$ be the measurement such that 
\begin{equation}\label{eq:mzo}
M_{1}=\mathbbm{1},~~
M_{2}=\cdots=M_{n}=0_{\mathbb{H}}. 
\end{equation}
We first assume Eq.~\eref{eq:ppte1} and consider $H\in\mathbb{H}_{\rm SEP}(\mathcal{E})$ giving $q_{\rm SEP}(\mathcal{E})$.
Since $\{M_{i}\}_{i=1}^{n}$ is obviously a separable measurement providing $p_{\rm SEP}(\mathcal{E})$, it follows from Theorem~\ref{thm:mnsc} that $\Tr(H-\eta_{1}\rho_{1})=0$.
From $H-\eta_{1}\rho_{1}\in\mathbb{SEP}^{*}$ and the argument with Inequality~\eref{eq:trez}, we have $H=\eta_{1}\rho_{1}$.
Thus, $H\in\mathbb{H}_{\rm SEP}(\mathcal{E})$ together with the definition of $\mathbb{H}_{\rm SEP}(\mathcal{E})$ leads us to Condition~\eref{eq:exppt}.
%%%%%%%%%%%%%%%%%%%%%%

%%%%%%%%%%%%%%%%%%%%%%
Conversely, let us suppose Condition~\eref{eq:exppt} and
consider 
\begin{equation}\label{eq:che1}
H=\eta_{1}\rho_{1},
\end{equation}
which is in $\mathbb{H}_{\rm SEP}(\mathcal{E})$ by Condition~\eref{eq:exppt}.
The separable measurement $\{M_{i}\}_{i=1}^{n}$ in Eq.~\eref{eq:mzo} and $H\in\mathbb{H}_{\rm SEP}(\mathcal{E})$ in Eq.~\eref{eq:che1} satisfy Condition~\eref{eq:comc}. Therefore, we have
\begin{equation}\label{eq:seqh1}
p_{\rm SEP}(\mathcal{E})=\sum_{i=1}^{n}\eta_{i}\Tr(\rho_{i}M_{i})=\eta_{1},
\end{equation}
where the first equality is by Theorem~\ref{thm:mnsc}.
\end{proof}
%%%%%%%%%%%%%%%%%%%%%%

%%%%%%%%%%%%%%%%%%%%%%
When Eq.~\eref{eq:ppte1} of Corollary~\ref{cor:exer1} holds, the maximum success probability $p_{\rm SEP}(\mathcal{E})$ can be achieved without the help of measurement,
simply by guessing $\rho_{1}$ is prepared.
As we can check in the proof of Corollary~\ref{cor:exer1}, the choice of $\rho_{1}$ in Corollary~\ref{cor:exer1} can be arbitrary. That is, any of $\{\rho_{i}\}_{i=1}^{n}$ can be used to play the role of $\rho_{1}$ in Corollary~\ref{cor:exer1}.
%%%%%%%%%%%%%%%%%%%%%%

%%%%%%%%%%%%%%%%%%%%%%
%      Section       %
%%%%%%%%%%%%%%%%%%%%%%
\section{Minimum-error discrimination by separable measurements}\label{sec:mep}
%%%%%%%%%%%%%%%%%%%%%%
\indent For a quantum state ensemble $\mathcal{E}$ in Eq.~\eref{eq:ens}, the minimum-error discrimination can be realized by separable measurements if and only if
\begin{equation}\label{eq:inptpg}
p_{\rm SEP}(\mathcal{E})=p_{\rm G}(\mathcal{E}),
\end{equation}
where $p_{\rm G}(\mathcal{E})$ and $p_{\rm SEP}(\mathcal{E})$ are defined in Eqs.~\eref{eq:pgdef} and \eref{eq:pptdef}, respectively. 
In this section, we provide some necessary and/or sufficient conditions for Eq.~\eref{eq:inptpg} in terms of EW.
%%%%%%%%%%%%%%%%%%%%%%

%%%%%%%%%%%%%%%%%%%%%%
%     Subsection     %
%%%%%%%%%%%%%%%%%%%%%%
\subsection{Necessary condition for $p_{\rm SEP}(\mathcal{E})=p_{\rm G}(\mathcal{E})$}
%%%%%%%%%%%%%%%%%%%%%%

%%%%%%%%%%%%%%%%%%%%%%
%      Theorem       %
%%%%%%%%%%%%%%%%%%%%%%
\begin{theorem}\label{thm:qmsc}
For a multipartite quantum state ensemble $\mathcal{E}=\{\eta_{i},\rho_{i}\}_{i=1}^{n}$,
if there exists separable measurement $\{M_{i}\}_{i=1}^{n}$ satisfying
\begin{equation}\label{eq:dewc}
\sum_{i=1}^{n}\eta_{i}\rho_{i}M_{i}\in\mathbb{H}_{\rm EW}(\mathcal{E}),
\end{equation}
where $\mathbb{H}_{\rm EW}(\mathcal{E})$ is defined in Eq.~\eref{eq:hppte},
then 
\begin{equation}\label{eq:ppes}
p_{\rm SEP}(\mathcal{E})=\sum_{i=1}^{n}\eta_{i}\Tr(\rho_{i}M_{i})<p_{\rm G}(\mathcal{E}).
\end{equation}
\end{theorem}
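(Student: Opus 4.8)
The plan is to let $\{M_i\}_{i=1}^n$ be a separable measurement satisfying Condition~\eref{eq:dewc} and set $H=\sum_{i=1}^n\eta_i\rho_iM_i$. First I would observe that because $H\in\mathbb{H}_{\rm SEP}(\mathcal{E})$, the quantity $\Tr H$ is an upper bound on $q_{\rm SEP}(\mathcal{E})=p_{\rm SEP}(\mathcal{E})$ by the definition in Eq.~\eref{eq:qptdef} and by Theorem~\ref{thm:pptq}; conversely, since $\{M_i\}_{i=1}^n$ is a separable measurement, $\sum_{i=1}^n\eta_i\Tr(\rho_iM_i)=\Tr H$ is a lower bound on $p_{\rm SEP}(\mathcal{E})$. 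These two bounds pinch, giving the first equality in Eq.~\eref{eq:ppes} and, as a byproduct, that $H$ provides $q_{\rm SEP}(\mathcal{E})$ and $\{M_i\}_{i=1}^n$ realizes $p_{\rm SEP}(\mathcal{E})$, so Theorem~\ref{thm:mnsc} applies and yields $\Tr[M_i(H-\eta_i\rho_i)]=0$ for all $i$.

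For the strict inequality $p_{\rm SEP}(\mathcal{E})<p_{\rm G}(\mathcal{E})$, I would argue by contradiction: suppose $p_{\rm SEP}(\mathcal{E})=p_{\rm G}(\mathcal{E})$. Then $\{M_i\}_{i=1}^n$, already realizing $p_{\rm SEP}(\mathcal{E})$, also realizes the global optimum $p_{\rm G}(\mathcal{E})$, hence must satisfy the optimality condition~\eref{eq:nscfme}, namely $\sum_{j=1}^n\eta_j\rho_jM_j-\eta_i\rho_i=H-\eta_i\rho_i\in\mathbb{H}_+$ for all $i=1,\ldots,n$. But by hypothesis $H\in\mathbb{H}_{\rm EW}(\mathcal{E})$, so by the definition in Eq.~\eref{eq:hppte} there is some $j$ with $H-\eta_j\rho_j\notin\mathbb{H}_+$ (indeed $H-\eta_j\rho_j$ is an EW) — a direct contradiction. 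Therefore $p_{\rm SEP}(\mathcal{E})<p_{\rm G}(\mathcal{E})$, completing the proof.

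I do not anticipate a serious obstacle here; the argument is essentially a bookkeeping exercise combining Theorems~\ref{thm:pptq} and \ref{thm:mnsc} with the global optimality condition~\eref{eq:nscfme}. The one point that needs care is making sure the chain of inequalities in the first paragraph is tight enough to invoke Theorem~\ref{thm:mnsc} cleanly, and that the contradiction in the second paragraph uses the correct implication direction of Eq.~\eref{eq:nscfme} (that any global-optimal measurement satisfies it, which is exactly the stated equivalence). It may be cleanest to present the first paragraph as the single display
\begin{equation*}
q_{\rm SEP}(\mathcal{E})\leqslant\Tr H=\sum_{i=1}^n\eta_i\Tr(\rho_iM_i)\leqslant p_{\rm SEP}(\mathcal{E})=q_{\rm SEP}(\mathcal{E}),
\end{equation*}
so that every inequality is forced to be an equality, after which the strict separation from $p_{\rm G}(\mathcal{E})$ follows purely from the definition of $\mathbb{H}_{\rm EW}(\mathcal{E})$.
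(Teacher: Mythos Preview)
Your proposal is correct and follows essentially the same approach as the paper: the same sandwich inequality (your display is just a reordering of the paper's chain~\eref{eq:stth}) gives the equality in~\eref{eq:ppes}, and the same contradiction via the global optimality condition~\eref{eq:nscfme} gives the strict inequality. The only difference is that you invoke Theorem~\ref{thm:mnsc} along the way, but you never actually use the resulting orthogonality relations, so that step can be dropped.
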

%%%%%%%%%%%%%%%%%%%%%%
\noindent Thus, non-existence of such separable measurement $\{M_{i}\}_{i=1}^{n}$ satisfying Condition~\eref{eq:dewc} is a necessary condition for $p_{\rm SEP}(\mathcal{E})=p_{\rm G}(\mathcal{E})$.
%%%%%%%%%%%%%%%%%%%%%%
\begin{proof}
Let us suppose $\{M_{i}\}_{i=1}^{n}$ is a separable measurement satisfying Condition~\eref{eq:dewc}, and consider
\begin{equation}\label{eq:hsrm}
H=\sum_{i=1}^{n}\eta_{i}\rho_{i}M_{i}.
\end{equation}
\Eref{eq:ppes} holds because
\begin{equation}\label{eq:stth}
\sum_{i=1}^{n}\eta_{i}\Tr(\rho_{i}M_{i})
\leqslant p_{\rm SEP}(\mathcal{E})=q_{\rm SEP}(\mathcal{E})
\leqslant \Tr H=\sum_{i=1}^{n}\eta_{i}\Tr(\rho_{i}M_{i}),
\end{equation}
where 
the first inequality is due to the definition of $p_{\rm SEP}(\mathcal{E})$, 
the second inequality follows from the definition of $q_{\rm SEP}(\mathcal{E})$ together with the condition that $H\in\mathbb{H}_{\rm EW}(\mathcal{E})\subseteq\mathbb{H}_{\rm SEP}(\mathcal{E})$,
the first equality is from Theorem~\ref{thm:pptq}, and the second equality is by Eq.~\eref{eq:hsrm}. This proves the equality of \eref{eq:ppes}.
%%%%%%%%%%%%%%%%%%%%%%

%%%%%%%%%%%%%%%%%%%%%%
If we assume $p_{\rm SEP}(\mathcal{E})=p_{\rm G}(\mathcal{E})$
in \eref{eq:ppes}, $\{M_{i}\}_{i=1}^{n}$ gives the optimal success probability $p_{\rm G}(\mathcal{E})$.
From the optimality condition in Eq.~\eref{eq:nscfme} and the argument with Eq.~\eref{eq:psmd},
we have
\begin{equation}\label{eq:ser}
\sum_{i=1}^{n}\eta_{i}\rho_{i}M_{i}\in\mathbb{H}_{\rm SEP}(\mathcal{E})\setminus\mathbb{H}_{\rm EW}(\mathcal{E}), 
\end{equation}
which contradicts Condition~\eref{eq:dewc}.
Thus, we have the inequality of \eref{eq:ppes}.
\end{proof}
%%%%%%%%%%%%%%%%%%%%%%%

%%%%%%%%%%%%%%%%%%%%%%
%       Example      %
%%%%%%%%%%%%%%%%%%%%%%
\noindent
\textbf{Example 1.}
For any integers $m,d\geqslant 2$, let us consider the $m$-qu$d$it state ensemble $\mathcal{E}=\{\eta_{i},\rho_{i}\}_{i=1}^{d+2}$ consisting of $d+2$ states,
\begin{flalign}\label{eq:exes}
&\eta_{i}=\frac{1}{d^{m}+d},~\rho_{i}=\ket{i-1}\!\bra{i-1}^{\otimes m},~i=1,\ldots,d,\nonumber\\
&\eta_{d+1}=\frac{d^{m}-d}{d^{m}+d},~\rho_{d+1}=\frac{1}{d^{m}-d}\Big(\mathbbm{1}-\sum_{j=0}^{d-1}\ket{j}\!\bra{j}^{\otimes m}\Big),\\
&\eta_{d+2}=\frac{d}{d^{m}+d},~\rho_{d+2}=\ket{\Phi}\!\bra{\Phi},\nonumber
\end{flalign}
where 
\begin{equation}\label{eq:psijk}
\ket{\Phi}=\frac{1}{\sqrt{d}}\sum_{i=0}^{d-1}\ket{i}^{\otimes m}.
\end{equation}
For a separable measurement $\{M_{i}\}_{i=1}^{d+2}$ with
\begin{equation}\label{eq:mij}
M_{i}=\ket{i-1}\!\bra{i-1}^{\otimes m},~i=1,\ldots,d,~
M_{d+1}=\mathbbm{1}-\sum_{j=0}^{d-1}\ket{j}\!\bra{j}^{\otimes m},~
M_{d+2}=0_{\mathbb{H}},
\end{equation}
we show that Condition~\eref{eq:dewc} holds with respect to the ensemble in Eq.~\eref{eq:exes}.
%%%%%%%%%%%%%%%%%%%%%%

%%%%%%%%%%%%%%%%%%%%%%
It is straightforward to verify that
\begin{flalign}\label{eq:sclt1}
&\sum_{j=1}^{d+2}\eta_{j}\rho_{j}M_{j}-\eta_{i}\rho_{i}=\frac{1}{d^{m}+d}\big(\mathbbm{1}-\ket{i-1}\!\bra{i-1}^{\otimes m}\big)\in\mathbb{H}_{+},~i=1,\ldots,d,\nonumber\\
&\sum_{j=1}^{d+2}\eta_{j}\rho_{j}M_{j}-\eta_{d+1}\rho_{d+1}=\frac{1}{d^{m}+d}\sum_{j=0}^{d-1}\ket{j}\!\bra{j}^{\otimes m}\in\mathbb{H}_{+},\\
&\sum_{j=1}^{d+2}\eta_{j}\rho_{j}M_{j}-\eta_{d+2}\rho_{d+2}=\frac{1}{d^{m}+d}\big(\mathbbm{1}-d\ket{\Phi}\!\bra{\Phi}\big)\in\mathbb{SEP}^{*},\nonumber
\end{flalign}
where the last inclusion is from the fact that
\begin{equation}\label{eq:dtle}
d\Tr(\ket{\Phi}\!\bra{\Phi}E)\leqslant\Tr E~~\forall E\in\mathbb{SEP}.
\end{equation}
To show the validity of Inequality~\eref{eq:dtle}, we assume that $\{\ket{e_{i}^{(k)}}\}_{i=1}^{d}$ is an orthonormal basis of $\mathbb{C}^{d}$ for each $k=1,\ldots,m$. Since $\{\ket{e_{i_{1}}^{(1)}}\otimes\cdots\otimes\ket{e_{i_{m}}^{(m)}}\}_{i_{1},\ldots,i_{m}}$ is an orthonormal basis of the multipartite Hilbert space $\mathcal{H}$,
there is a set of complex numbers $\{c_{i_{1},\ldots,i_{m}}\}_{i_{1},\ldots,i_{m}}$ such that
\begin{equation}\label{eq:phrp}
\ket{\Phi}=\sum_{i_{1},\ldots,i_{m}=1}^{d}c_{i_{1},\ldots,i_{m}}\ket{e_{i_{1}}^{(1)}}\otimes\cdots\otimes\ket{e_{i_{m}}^{(m)}}.
\end{equation}
For each $i_{1}\in\{1,\ldots,d\}$, we have
\begin{equation}\label{eq:odte}
\sum_{i_{2},\ldots,i_{m}=1}^{d}|c_{i_{1},\ldots,i_{m}}|^{2}=
\bra{\Phi}\Big[\ket{e_{i_{1}}^{(1)}}\!\bra{e_{i_{1}}^{(1)}}\otimes\big(\sum_{j=0}^{d-1}\ket{j}\!\bra{j}\big)^{\otimes m-1}\Big]\ket{\Phi}
=\frac{1}{d},
\end{equation}
where the first equality follows from Eq.~\eref{eq:phrp} and the second equality is due to Eq.~\eref{eq:psijk}. Equation~\eref{eq:odte} leads us to
\begin{equation}\label{eq:cimd}
|c_{i_{1},\ldots,i_{m}}|^{2}\leqslant\frac{1}{d}~~\forall i_{1},\ldots,i_{m}=1,\ldots,d.
\end{equation}
Since the choice of $\{\ket{e_{i}^{(k)}}\}_{i=1}^{d}$ can be arbitrary for each $k=1,\ldots,m$, we have
\begin{equation}
d\Tr(|\Phi\rangle\!\langle\Phi|e\rangle\!\langle e|)\leqslant \langle e|e\rangle
\end{equation}
for any product vector $\ket{e}\in\mathcal{H}$. Therefore, Inequality~\eref{eq:dtle} holds.
%%%%%%%%%%%%%%%%%%%%%%

%%%%%%%%%%%%%%%%%%%%%
Now, the inclusions in \eref{eq:sclt1} together with $\mathbb{H}_{+}\subseteq\mathbb{SEP}^{*}$ imply
\begin{equation}\label{eq:ipml}
\sum_{j=1}^{d+2}\eta_{j}\rho_{j}M_{j}-\eta_{i}\rho_{i}\in\mathbb{SEP}^{*}~~\forall i=1,\ldots,d+2.
\end{equation}
Furthermore, a straightforward calculation leads us to
\begin{equation}\label{eq:mwhe}
\bra{\Phi}\Big(\sum_{j=1}^{d+2}\eta_{j}\rho_{j}M_{j}-\eta_{d+2}\rho_{d+2}\Big)\ket{\Phi}
=-\frac{d-1}{d^{m}+d}<0.
\end{equation}
From Eq.~\eref{eq:mwhe}, we have
\begin{equation}\label{eq:mkle}
\sum_{j=1}^{d+2}\eta_{j}\rho_{j}M_{j}-\eta_{d+2}\rho_{d+2}\notin\mathbb{H}_{+}.
\end{equation}
From Eqs.~\eref{eq:ipml} and \eref{eq:mkle},
the ensemble in Eq.~\eref{eq:exes} and the measurement in Eq.~\eref{eq:mij} satisfy
Condition~\eref{eq:dewc}. Thus, Theorem~\ref{thm:qmsc} leads us to
\begin{equation}\label{eq:pse6}
p_{\rm SEP}(\mathcal{E})
=\sum_{i=1}^{d+2}\eta_{i}\Tr(\rho_{i}M_{i})=\frac{d^{m}}{d^{m}+d}<p_{\rm G}(\mathcal{E}).
\end{equation}
%%%%%%%%%%%%%%%%%%%%%%

%%%%%%%%%%%%%%%%%%%%%%
We also note that the separable measurement in Eq.~\eref{eq:mij} is a LOCC measurement because it can be implemented by performing the same local measurement $\{\ket{l}\!\bra{l}\}_{l=0}^{d-1}$ on each party. Thus, we have
\begin{equation}\label{eq:plep}
p_{\rm L}(\mathcal{E})=p_{\rm SEP}(\mathcal{E})=\frac{d^{m}}{d^{m}+d}.
\end{equation}
%%%%%%%%%%%%%%%%%%%%%%

%%%%%%%%%%%%%%%%%%%%%%
%     Subsection     %
%%%%%%%%%%%%%%%%%%%%%%
\subsection{Necessary and sufficient condition for $p_{\rm SEP}(\mathcal{E})=p_{\rm G}(\mathcal{E})$}\label{subsec:nsc}
%%%%%%%%%%%%%%%%%%%%%%

%%%%%%%%%%%%%%%%%%%%%%
%      Theorem       %
%%%%%%%%%%%%%%%%%%%%%%
\begin{theorem}\label{thm:qupb}
For a multipartite quantum state ensemble $\mathcal{E}=\{\eta_{i},\rho_{i}\}_{i=1}^{n}$, $p_{\rm SEP}(\mathcal{E})=p_{\rm G}(\mathcal{E})$ if and only if
there exists $H\in\mathbb{H}_{\rm SEP}(\mathcal{E})$
such that it provides $q_{\rm SEP}(\mathcal{E})$
but does not satisfy
\begin{equation}\label{eq:hdc}
H\in\mathbb{H}_{\rm EW}(\mathcal{E}),
\end{equation}
or equivalently, there is $H\in\mathbb{H}$ satisfying Condition~\eref{eq:psmd} and $\Tr H=q_{\rm SEP}(\mathcal{E})$.
\end{theorem}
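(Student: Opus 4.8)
The plan is to prove the equivalence in its last (``$H\in\mathbb{H}$'') form, namely that $p_{\rm SEP}(\mathcal{E})=p_{\rm G}(\mathcal{E})$ holds if and only if there exists $H\in\mathbb{H}$ with $H-\eta_i\rho_i\in\mathbb{H}_+$ for all $i$ and $\Tr H=q_{\rm SEP}(\mathcal{E})$. First I would record that the two phrasings of the right-hand condition coincide: by the equivalence \eqref{eq:psmd}$\,\Leftrightarrow\,$\eqref{eq:eqhpt}, Condition~\eqref{eq:psmd} says precisely that $H-\eta_i\rho_i\in\mathbb{H}_+$ for all $i$, and since $\mathbb{H}_+\subseteq\mathbb{SEP}^*$ any such $H$ already belongs to $\mathbb{H}_{\rm SEP}(\mathcal{E})$; hence ``$H\in\mathbb{H}_{\rm SEP}(\mathcal{E})$ provides $q_{\rm SEP}(\mathcal{E})$ and $H\notin\mathbb{H}_{\rm EW}(\mathcal{E})$'' is the same statement as ``$H-\eta_i\rho_i\in\mathbb{H}_+$ for all $i$ and $\Tr H=q_{\rm SEP}(\mathcal{E})$''.

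For the ``if'' direction I would take such an $H$ and note that, for an \emph{arbitrary} measurement $\{M_i\}_{i=1}^n$, each $\Tr[M_i(H-\eta_i\rho_i)]$ is nonnegative since $M_i$ and $H-\eta_i\rho_i$ are both positive semidefinite; summing over $i$ and using $\sum_iM_i=\mathbbm{1}$ gives $\sum_{i=1}^n\eta_i\Tr(\rho_iM_i)\leqslant\Tr H$. Maximizing the left side over all measurements then yields $p_{\rm G}(\mathcal{E})\leqslant\Tr H=q_{\rm SEP}(\mathcal{E})=p_{\rm SEP}(\mathcal{E})$, where the second equality is Theorem~\ref{thm:pptq}; together with $p_{\rm SEP}(\mathcal{E})\leqslant p_{\rm G}(\mathcal{E})$ from \eqref{eq:plptpg} this forces $p_{\rm SEP}(\mathcal{E})=p_{\rm G}(\mathcal{E})$.

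For the ``only if'' direction I would assume $p_{\rm SEP}(\mathcal{E})=p_{\rm G}(\mathcal{E})$ and first observe that the maximum defining $p_{\rm SEP}(\mathcal{E})$ in \eqref{eq:pptdef} is attained, because the set of separable measurements is compact ($\mathbb{SEP}$ is closed, and $\sum_iM_i=\mathbbm{1}$ with $M_i\in\mathbb{SEP}\subseteq\mathbb{H}_+$ forces $0\leqslant M_i\leqslant\mathbbm{1}$). Let $\{M_i\}_{i=1}^n$ be a separable measurement attaining it; its success probability equals $p_{\rm SEP}(\mathcal{E})=p_{\rm G}(\mathcal{E})$, so it provides the optimal success probability for minimum-error discrimination and the optimality condition \eqref{eq:nscfme} applies. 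Setting $H=\sum_{j=1}^n\eta_j\rho_jM_j$, Condition~\eqref{eq:nscfme} gives $H-\eta_i\rho_i\in\mathbb{H}_+$ for all $i$ (which in particular shows $H\in\mathbb{H}$), and $\Tr H=\sum_{j=1}^n\eta_j\Tr(\rho_jM_j)=p_{\rm SEP}(\mathcal{E})=q_{\rm SEP}(\mathcal{E})$ by Theorem~\ref{thm:pptq}. Thus $H$ is the required operator.

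The whole argument is really a repackaging of the semidefinite-programming duality underlying minimum-error discrimination, combined with Theorem~\ref{thm:pptq}, so I do not expect a serious obstacle. The only two points that need a little care are the attainment of $p_{\rm SEP}(\mathcal{E})$ by an actual separable measurement (the compactness remark), and the verification that the two forms of the right-hand condition in the statement are genuinely interchangeable (handled through \eqref{eq:psmd}$\,\Leftrightarrow\,$\eqref{eq:eqhpt}).
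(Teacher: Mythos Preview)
Your proof is correct. The ``only if'' direction coincides with the paper's: both take a separable measurement attaining $p_{\rm SEP}(\mathcal{E})$, invoke the global optimality condition~\eqref{eq:nscfme}, and set $H=\sum_j\eta_j\rho_jM_j$.

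The ``if'' direction, however, is genuinely different. The paper fixes a separable measurement $\{M_i\}$ realizing $p_{\rm SEP}(\mathcal{E})$, applies Theorem~\ref{thm:mnsc} to obtain $\Tr[M_i(H-\eta_i\rho_i)]=0$, uses positive-semidefiniteness of both factors to conclude $(H-\eta_i\rho_i)M_i=0$, and then verifies directly that this $\{M_i\}$ satisfies the global optimality condition~\eqref{eq:nscfme}. You instead observe that any $H$ with $H-\eta_i\rho_i\in\mathbb{H}_+$ already upper-bounds $p_{\rm G}(\mathcal{E})$ by weak duality for the unrestricted problem, and then chain $p_{\rm G}(\mathcal{E})\leqslant\Tr H=q_{\rm SEP}(\mathcal{E})=p_{\rm SEP}(\mathcal{E})\leqslant p_{\rm G}(\mathcal{E})$. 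Your route is shorter and avoids both Theorem~\ref{thm:mnsc} and the orthogonality step; the paper's route buys something extra, namely the constructive conclusion that every separable measurement attaining $p_{\rm SEP}(\mathcal{E})$ is in fact a globally optimal measurement (Eq.~\eqref{eq:hetai}), not merely that the two optimal values coincide.
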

%%%%%%%%%%%%%%%%%%%%%%

%%%%%%%%%%%%%%%%%%%%%%
\begin{proof}
Let $\{M_{i}\}_{i=1}^{n}$ be a separable measurement giving $p_{\rm SEP}(\mathcal{E})$. We first suppose $p_{\rm SEP}(\mathcal{E})=p_{\rm G}(\mathcal{E})$ and consider 
\begin{equation}\label{eq:hjn}
H=\sum_{i=1}^{n}\eta_{i}\rho_{i}M_{i}.
\end{equation}
Since the measurement $\{M_{i}\}_{i=1}^{n}$ gives
the optimal success probability $p_{\rm G}(\mathcal{E})$,
the optimality condition in Eq.~\eref{eq:nscfme} leads us to
\begin{equation}
H-\eta_{i}\rho_{i}
=\sum_{j=1}^{n}\eta_{j}\rho_{j}M_{j}-\eta_{i}\rho_{i}\in\mathbb{H}_{+}~~\forall i=1,\ldots,n.
\label{eq:hemp}
\end{equation}
Therefore, $H$ satisfies Condition~\eref{eq:psmd}. Moreover, we have
\begin{equation}\label{eq:trqpt}
\Tr H=\sum_{i=1}^{n}\eta_{i}\Tr(\rho_{i}M_{i})=p_{\rm SEP}(\mathcal{E})=q_{\rm SEP}(\mathcal{E}),
\end{equation}
where the first equality is from Eq.~\eref{eq:hjn},
the second equality is by the assumption of $\{M_{i}\}_{i=1}^{n}$, and 
the last equality is due to Theorem~\ref{thm:pptq}.
%%%%%%%%%%%%%%%%%%%%%%

%%%%%%%%%%%%%%%%%%%%%%
Conversely, let us assume $H$ is an element of $\mathbb{H}$ satisfying Condition~\eref{eq:psmd} and $\Tr H=q_{\rm SEP}(\mathcal{E})$.
Thus, the positivenesses in \eref{eq:eqhpt} is satisfied in term of $H$.
For each $i=1,\ldots,n$, the positive-semidefinite operators $M_{i}$ and $H-\eta_{i}\rho_{i}$ are orthogonal since they satisfy Condition~\eref{eq:comc} from Theorem~\ref{thm:mnsc}.
%%%%%%%%%%%%%%%%%%%%%%

%%%%%%%%%%%%%%%%%%%%%%
The optimality condition in Eq.~\eref{eq:nscfme} holds for the measurement $\{M_{i}\}_{i=1}^{n}$ because
\begin{equation}\label{eq:hetai}
\sum_{j=1}^{n}\eta_{j}\rho_{j}M_{j}-\eta_{i}\rho_{i}
=\sum_{j=1}^{n}\eta_{j}\rho_{j}M_{j}
+\sum_{k=1}^{n}(H-\eta_{k}\rho_{k})M_{k}-\eta_{i}\rho_{i}
=H-\eta_{i}\rho_{i}\in\mathbb{H}_{+}~~\forall i=1,\ldots,n,
\end{equation}
where the first equality is from the orthogonality of $M_{i}$ and $H-\eta_{i}\rho_{i}$ for each $i=1,\ldots,n$ and the second equality is from $\sum_{i=1}^{n}M_{i}=\mathbbm{1}$.
Thus, we have
\begin{equation}\label{eq:pgeps}
p_{\rm G}(\mathcal{E})
=\sum_{i=1}^{n}\eta_{i}\Tr(\rho_{i}M_{i})
=p_{\rm SEP}(\mathcal{E}),
\end{equation}
where the second equality is due to the assumption of $\{M_{i}\}_{i=1}^{n}$.
\end{proof}
%%%%%%%%%%%%%%%%%%%%%%

%%%%%%%%%%%%%%%%%%%%%%
If $p_{\rm SEP}(\mathcal{E})=p_{\rm G}(\mathcal{E})$,
Theorem~\ref{thm:qupb} implies that there must exist $H\in\mathbb{H}_{\rm SEP}(\mathcal{E})\setminus\mathbb{H}_{\rm EW}(\mathcal{E})$ providing $q_{\rm SEP}(\mathcal{E})$.
In this case, there possibly exists another Hermitian operator $H'$ satisfying
$H'\in\mathbb{H}_{\rm EW}(\mathcal{E})$ and $\Tr H'= q_{\rm SEP}(\mathcal{E})$, 
which is illustrated in the following example.
%%%%%%%%%%%%%%%%%%%%%%

%%%%%%%%%%%%%%%%%%%%%%
%       Example      %
%%%%%%%%%%%%%%%%%%%%%%
\noindent
\textbf{Example 2.}
For any integers $m,d\geqslant 2$, let us consider the $m$-qu$d$it state ensemble $\mathcal{E}=\{\eta_{i},\rho_{i}\}_{i=1}^{d^{m}-d+1}$ consisting of $d^{m}-d+1$ states,
\begin{equation}\label{eq:tpose}
\eta_{1}=\frac{d}{d^{m}},~\rho_{1}=\ket{\Phi}\!\bra{\Phi},~
\eta_{i}=\frac{1}{d^{m}},~\rho_{i}=\ket{\beta_{i}}\!\bra{\beta_{i}},~i=2,\ldots,d^{m}-d+1,
\end{equation}
where $\ket{\Phi}$ is defined in Eq.~\eref{eq:psijk} and $\{\ket{\beta_{i}}\}_{i=2}^{d^{m}-d+1}$ is a set of orthonormal product vectors orthogonal to $\ket{j}^{\otimes m}$ for all $j=0,\ldots,d-1$.
%%%%%%%%%%%%%%%%%%%%%%

%%%%%%%%%%%%%%%%%%%%%%
For a separable measurement $\{M_{i}\}_{i=1}^{d^{m}-d+1}$ with
\begin{equation}\label{eq:expl2}
M_{1}=\sum_{j=0}^{d-1}\ket{j}\!\bra{j}^{\otimes m},
~M_{i}=\ket{\beta_{i}}\!\bra{\beta_{i}},~i=2,\ldots,d^{m}-d+1,
\end{equation}
we can easily see that the success probability obtained from the separable measurement in discriminating the states from the ensemble $\mathcal{E}$ in Eq.~\eref{eq:tpose} is one, that is,
\begin{equation}\label{eq:esp}
\sum_{i=1}^{d^{m}-d+1}\eta_{i}\Tr(\rho_{i}M_{i})=1.
\end{equation}
The success probability in Eq.~\eref{eq:esp} is a lower bound of $p_{\rm SEP}(\mathcal{E})$ in Eq.~\eref{eq:pptdef}, therefore
\begin{equation}\label{eq:elb}
p_{\rm SEP}(\mathcal{E})\geqslant1.
\end{equation}
Since $p_{\rm G}(\mathcal{E})$ is bounded above by 1, 
Inequalities~\eref{eq:plptpg} and \eref{eq:elb} imply
\begin{equation}\label{eq:allpeq}
p_{\rm SEP}(\mathcal{E})=p_{\rm G}(\mathcal{E})=1.
\end{equation}
Furthermore, we have
\begin{equation}\label{eq:qex2}
q_{\rm SEP}(\mathcal{E})=p_{\rm SEP}(\mathcal{E})=1,
\end{equation}
where the first equality is by Theorem~\ref{thm:pptq}
and the second equality is from Eq.~\eref{eq:allpeq}.
%%%%%%%%%%%%%%%%%%%%%%

%%%%%%%%%%%%%%%%%%%%%%
Let us first consider the Hermitian operator
\begin{equation}\label{eq:hex2}
H=\sum_{i=1}^{d^{m}-d+1}\eta_{i}\rho_{i}.
\end{equation}
Equations~\eref{eq:qex2} and \eref{eq:hex2} imply
\begin{equation}\label{eq:trh1}
\Tr H=1=q_{\rm SEP}(\mathcal{E}).
\end{equation}
Moreover, a straightforward calculation leads us to
\begin{equation}\label{eq:exhe}
H-\eta_{i}\rho_{i}=
\sum_{\substack{j=1\\j\neq i}}^{d^{m}-d+1}\eta_{j}\rho_{j}\in\mathbb{H}_{+}~\forall i=1,\ldots,n.
\end{equation}
Due to Eqs.~\eref{eq:trh1} and \eref{eq:exhe}, $H$ is an element of $\mathbb{H}_{\rm SEP}(\mathcal{E})\setminus\mathbb{H}_{\rm EW}(\mathcal{E})$ giving $q_{\rm SEP}(\mathcal{E})$.
Thus, we have $H$ satisfying the conditions of Theorem~\ref{thm:qupb}.  
%%%%%%%%%%%%%%%%%%%%%%

%%%%%%%%%%%%%%%%%%%%%%
Now, let us consider the Hermitian operator
\begin{equation}\label{eq:hexf2}
H_{t}=t\sum_{i=1}^{d^{m}-d+1}\eta_{i}\rho_{i}+\frac{1-t}{d^{m}}\mathbbm{1},
\end{equation}
where $0\leqslant t<1$.
Equations~\eref{eq:qex2} and \eref{eq:hexf2} imply
\begin{equation}\label{eq:trhe1}
\Tr H_{t}=1=q_{\rm SEP}(\mathcal{E})
\end{equation}
for $0\leqslant t<1$. Moreover,
a straightforward calculation leads us to
\begin{flalign}\label{eq:exhef}
&H_{t}-\eta_{1}\rho_{1}=
t\sum_{j=2}^{d^{m}-d+1}\eta_{j}\rho_{j}+\frac{1-t}{d^{m}}(\mathbbm{1}-d\ket{\Phi}\!\bra{\Phi})
\in\mathbb{SEP}^{*},\nonumber\\
&H_{t}-\eta_{i}\rho_{i}=
t\sum_{\substack{j=1\\j\neq i}}^{d^{m}-d+1}\eta_{j}\rho_{j}+\frac{1-t}{d^{m}}(\mathbbm{1}-\ket{\beta_{i}}\!\bra{\beta_{i}})\in\mathbb{H}_{+}~\forall i\neq1,
\end{flalign}
where the first inclusion follows from Inequality~\eref{eq:dtle}.
Due to Eq.~\eref{eq:exhef} together with $\mathbb{H}_{+}\subseteq\mathbb{SEP}^{*}$, we have
\begin{equation}\label{eq:htsep}
H_{t}\in\mathbb{H}_{\rm SEP}(\mathcal{E}). 
\end{equation}
We also note that
\begin{equation}\label{eq:inpr}
\bra{\Phi}(H_{t}-\eta_{1}\rho_{1})\ket{\Phi}
=-\frac{(1-t)(d-1)}{d^{m}}<0
\end{equation}
for $0\leqslant t<1$. Equations~\eref{eq:trhe1}, \eref{eq:htsep} and Inequality~\eref{eq:inpr} imply that $H_{t}$ is an element of $\mathbb{H}_{\rm EW}(\mathcal{E})$ giving $q_{\rm SEP}(\mathcal{E})$. By letting $H'=H_{t}$, we have another Hermitian operator $H'$, besides $H$ in Eq.~\eref{eq:hex2},
satisfying $H'\in\mathbb{H}_{\rm EW}(\mathcal{E})$ and $\Tr H'= q_{\rm SEP}(\mathcal{E})$.
%%%%%%%%%%%%%%%%%%%%%%

%%%%%%%%%%%%%%%%%%%%%%
For a multipartite quantum state ensemble $\mathcal{E}=\{\eta_{i},\rho_{i}\}_{i=1}^{n}$
where $H$ is the only element of $\mathbb{H}_{\rm SEP}(\mathcal{E})$ providing $q_{\rm SEP}(\mathcal{E})$, Theorem~\ref{thm:qupb} tell us that $p_{\rm SEP}(\mathcal{E})<p_{\rm G}(\mathcal{E})$
if and only if there exists an EW in $\{H-\eta_{i}\rho_{i}\}_{i=1}^{n}$.
From Corollary~\ref{cor:exer1},
$\eta_{1}\rho_{1}$ is the only element of $\mathbb{H}_{\rm SEP}(\mathcal{E})$ providing $q_{\rm SEP}(\mathcal{E})$ when Condition~\eref{eq:exppt} holds.
Thus, we have the following corollary.
%%%%%%%%%%%%%%%%%%%%%%

%%%%%%%%%%%%%%%%%%%%%%
\begin{corollary}\label{cor:pql1}
For a multipartite quantum state ensemble $\mathcal{E}=\{\eta_{i},\rho_{i}\}_{i=1}^{n}$ with Condition~\eref{eq:exppt}, $p_{\rm SEP}(\mathcal{E})<p_{\rm G}(\mathcal{E})$ if and only if 
there exists an EW in $\{\eta_{1}\rho_{1}-\eta_{i}\rho_{i}\}_{i=2}^{n}$.
\end{corollary}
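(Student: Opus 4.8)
The plan is to combine Corollary~\ref{cor:exer1} with Theorem~\ref{thm:qupb} and then unwind the definition of $\mathbb{H}_{\rm EW}(\mathcal{E})$. First I would observe that under Condition~\eref{eq:exppt} the operator $\eta_{1}\rho_{1}$ lies in $\mathbb{H}_{\rm SEP}(\mathcal{E})$, since $\eta_{1}\rho_{1}-\eta_{1}\rho_{1}=0_{\mathbb{H}}\in\mathbb{SEP}^{*}$ and the remaining requirements $\eta_{1}\rho_{1}-\eta_{i}\rho_{i}\in\mathbb{SEP}^{*}$ for $i\geqslant2$ are exactly \eref{eq:exppt}. By Corollary~\ref{cor:exer1}, $\eta_{1}\rho_{1}$ is moreover the \emph{only} element of $\mathbb{H}_{\rm SEP}(\mathcal{E})$ providing $q_{\rm SEP}(\mathcal{E})$.

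Next I would apply Theorem~\ref{thm:qupb}: $p_{\rm SEP}(\mathcal{E})=p_{\rm G}(\mathcal{E})$ holds if and only if there exists $H\in\mathbb{H}_{\rm SEP}(\mathcal{E})$ providing $q_{\rm SEP}(\mathcal{E})$ with $H\notin\mathbb{H}_{\rm EW}(\mathcal{E})$. Because the unique candidate for such an $H$ is $\eta_{1}\rho_{1}$, this reduces to the statement that $p_{\rm SEP}(\mathcal{E})=p_{\rm G}(\mathcal{E})$ if and only if $\eta_{1}\rho_{1}\notin\mathbb{H}_{\rm EW}(\mathcal{E})$. Taking the contrapositive, $p_{\rm SEP}(\mathcal{E})<p_{\rm G}(\mathcal{E})$ if and only if $\eta_{1}\rho_{1}\in\mathbb{H}_{\rm EW}(\mathcal{E})$.

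Finally I would unwind \eref{eq:hppte}: $\eta_{1}\rho_{1}\in\mathbb{H}_{\rm EW}(\mathcal{E})$ means that $\eta_{1}\rho_{1}-\eta_{j}\rho_{j}\notin\mathbb{H}_{+}$ for some $j\in\{1,\ldots,n\}$. Since $\eta_{1}\rho_{1}-\eta_{1}\rho_{1}=0_{\mathbb{H}}\in\mathbb{H}_{+}$, necessarily $j\geqslant2$, and for every such $j$ Condition~\eref{eq:exppt} gives $\eta_{1}\rho_{1}-\eta_{j}\rho_{j}\in\mathbb{SEP}^{*}$; hence ``$\eta_{1}\rho_{1}-\eta_{j}\rho_{j}\notin\mathbb{H}_{+}$'' is precisely the statement that $\eta_{1}\rho_{1}-\eta_{j}\rho_{j}$ is an EW in the sense of Definition~\ref{def:ew}. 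Therefore $\eta_{1}\rho_{1}\in\mathbb{H}_{\rm EW}(\mathcal{E})$ if and only if $\{\eta_{1}\rho_{1}-\eta_{i}\rho_{i}\}_{i=2}^{n}$ contains an EW, which together with the previous paragraph yields the claimed equivalence.

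I do not anticipate a genuine obstacle: the argument is essentially a specialization of Theorem~\ref{thm:qupb} using the uniqueness supplied by Corollary~\ref{cor:exer1}. The only points needing care are verifying that $\eta_{1}\rho_{1}\in\mathbb{H}_{\rm SEP}(\mathcal{E})$ so that Corollary~\ref{cor:exer1} is applicable, and excluding the index $j=1$ in the last step, which is what allows Condition~\eref{eq:exppt} to upgrade ``not positive semidefinite'' to ``entanglement witness''.
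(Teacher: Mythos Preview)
Your proposal is correct and follows essentially the same approach as the paper: the paper's justification (given in the paragraph immediately preceding the corollary rather than in a formal proof environment) likewise invokes the uniqueness of the minimizer $\eta_{1}\rho_{1}$ from Corollary~\ref{cor:exer1} and then specializes Theorem~\ref{thm:qupb}. Your write-up is simply more explicit about excluding the index $j=1$ and translating membership in $\mathbb{H}_{\rm EW}(\mathcal{E})$ into the existence of an EW among $\{\eta_{1}\rho_{1}-\eta_{i}\rho_{i}\}_{i=2}^{n}$.
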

%%%%%%%%%%%%%%%%%%%%%%

%%%%%%%%%%%%%%%%%%%%%%
%       Example      %
%%%%%%%%%%%%%%%%%%%%%%
\noindent
\textbf{Example 3.}
For any integers $m,d\geqslant 2$, let us consider the $m$-qu$d$it state ensemble $\mathcal{E}=\{\eta_{i},\rho_{i}\}_{i=1}^{d+1}$ consisting of $d+1$ states, 
\begin{equation}\label{eq:exerho}
\eta_{1}=\frac{1}{2},~\rho_{1}=\frac{1}{d^{m}}\mathbbm{1},~
\eta_{i}=\frac{1}{2d},~\rho_{i}=\frac{d^{2}-d}{d^{m}-d}\ket{\Phi_{i}}\!\bra{\Phi_{i}}+\frac{d^{m}-d^{2}}{d^{m}(d^{m}-d)}\mathbbm{1},~i=2,\ldots,d+1,
\end{equation}
where 
\begin{equation}\label{eq:pjdkm}
\ket{\Phi_{j}}=\frac{1}{\sqrt{d}}\sum_{k=0}^{d-1}\exp\Big(\frac{{\rm i}2\pi jk}{d}\Big)\ket{k}^{\otimes m}.
\end{equation}
%%%%%%%%%%%%%%%%%%%%%%

%%%%%%%%%%%%%%%%%%%%%%
For each $i=2,\ldots,d+1$,
a straightforward calculation leads us to
\begin{equation}\label{eq:erm0}
\eta_{1}\rho_{1}-\eta_{i}\rho_{i}
=\frac{d-1}{2d(d^{m}-d)}(\mathbbm{1}-d\ket{\Phi_{i}}\!\bra{\Phi_{i}})\in\mathbb{SEP}^{*},
\end{equation}
where the inclusion is from the fact that
\begin{equation}\label{eq:hips}
d\Tr(\ket{\Phi_{i}}\!\bra{\Phi_{i}}E)\leqslant\Tr E~~\forall E\in\mathbb{SEP}.
\end{equation}
We can show the validity of Inequality~\eref{eq:hips} in a similar way to that of Inequality~\eref{eq:dtle}.
%%%%%%%%%%%%%%%%%%%%%%

%%%%%%%%%%%%%%%%%%%%%%
Now, the inclusion in \eref{eq:erm0} together with Corollary~\ref{cor:exer1} imply
\begin{equation}\label{eq:ped}
p_{\rm SEP}(\mathcal{E})=\eta_{1}=\frac{1}{2}.
\end{equation}
Furthermore, a straightforward calculation leads us to
\begin{equation}\label{eq:inr}
\bra{\Phi_{i}}\Big(\eta_{1}\rho_{1}-\eta_{i}\rho_{i}\Big)\ket{\Phi_{i}}
=-\frac{(d-1)^{2}}{2d(d^{m}-d)}<0
~~\forall i=2,\ldots,d+1.
\end{equation}
From Eq.~\eref{eq:inr}, we have
\begin{equation}\label{eq:ermnp}
\eta_{1}\rho_{1}-\eta_{i}\rho_{i}\notin\mathbb{H}_{+}
~~\forall i=2,\ldots,d+1.
\end{equation}
From Eqs.~\eref{eq:erm0} and \eref{eq:ermnp}, 
$\eta_{1}\rho_{1}-\eta_{i}\rho_{i}$ is an EW for any $i=2,\ldots,d+1$.
Thus, Corollary~\ref{cor:pql1} leads us to 
\begin{equation}\label{eq:pshl}
p_{\rm SEP}(\mathcal{E})=\frac{1}{2}<p_{\rm G}(\mathcal{E}).
\end{equation}
%%%%%%%%%%%%%%%%%%%%%%

%%%%%%%%%%%%%%%%%%%%%%
%      Section       %
%%%%%%%%%%%%%%%%%%%%%%
\section{Construction of nonlocal quantum state ensemble}\label{sec:ces}
%%%%%%%%%%%%%%%%%%%%%%
In this section, we provide a systematic way in terms of EW to construct multipartite quantum state ensembles showing nonlocality in state discrimination, that is, $p_{\rm L}(\mathcal{E})<p_{\rm G}(\mathcal{E})$.
For a given EW $W$, let us consider
the multipartite quantum state ensemble $\mathcal{E}=\{\eta_{i},\rho_{i}\}_{i=1}^{2}$ where 
\begin{equation}\label{eq:dfex}
\eta_{1}=\frac{\Tr(P+W)}{\Tr(2P+W)},~
\rho_{1}=\frac{P+W}{\Tr(P+W)},~
\eta_{2}=\frac{\Tr P}{\Tr(2P+W)},~
\rho_{2}=\frac{P}{\Tr P},
\end{equation}
with any $P\in\mathbb{H}_{+}$ satisfying 
\begin{equation}
P+W\in\mathbb{H}_{+}.
\end{equation}
Since $\eta_{1}\rho_{1}-\eta_{2}\rho_{2}$
is proportional to the EW $W$, $p_{\rm SEP}(\mathcal{E})<p_{\rm G}(\mathcal{E})$ holds from
Corollary~\ref{cor:pql1}.
Thus, Inequality~\eref{eq:plptpg} leads us to
$p_{\rm L}(\mathcal{E})<p_{\rm G}(\mathcal{E})$.
%%%%%%%%%%%%%%%%%%%%%%

%%%%%%%%%%%%%%%%%%%%%%
Corollary~\ref{cor:pql1} can also be used to construct 
a multipartite quantum state ensemble $\mathcal{E}=\{\eta_{i},\rho_{i}\}_{i=1}^{n}$ with $n>2$ showing nonlocality in quantum state discrimination. For a set of EWs $\{W_{i}\}_{i=2}^{n}$,
let us consider the multipartite quantum state ensemble $\mathcal{E}=\{\eta_{i},\rho_{i}\}_{i=1}^{n}$ where
\begin{equation}
\eta_{1}=\frac{\Tr\mathbbm{1}}{\Tr(n\mathbbm{1}-\sum_{j=2}^{n}\lambda_{j}W_{j})},~
\rho_{1}=\frac{\mathbbm{1}}{\Tr\mathbbm{1}},~
\eta_{i}=\frac{\Tr(\mathbbm{1}-\lambda_{i}W_{i})}{\Tr(n\mathbbm{1}-\sum_{j=2}^{n}\lambda_{j}W_{j})},~
\rho_{i}=\frac{\mathbbm{1}-\lambda_{i}W_{i}}{\Tr(\mathbbm{1}-\lambda_{i}W_{i})},
~i=2,\ldots,n,
\end{equation}
with any set of positive real numbers $\{\lambda_{i}\}_{i=2}^{n}$ satisfying 
\begin{equation}
\mathbbm{1}-\lambda_{i}W_{i}\in\mathbb{H}_{+}~~\forall i=2,\ldots,n.
\end{equation}
Because $\eta_{1}\rho_{1}-\eta_{i}\rho_{i}$ is proportional to $W_{i}$ for any $i\in\{2,\ldots,n\}$,
$p_{\rm SEP}(\mathcal{E})<p_{\rm G}(\mathcal{E})$ holds from
Corollary~\ref{cor:pql1}.
Thus, Inequality~\eref{eq:plptpg} leads us to
$p_{\rm L}(\mathcal{E})<p_{\rm G}(\mathcal{E})$.
%%%%%%%%%%%%%%%%%%%%%%

%%%%%%%%%%%%%%%%%%%%%%
%    Conclusions     %
%%%%%%%%%%%%%%%%%%%%%%
\section{Conclusions}\label{sec:dis}
%%%%%%%%%%%%%%%%%%%%%%
We have considered multipartite quantum state discrimination and shown that the minimum-error discrimination by separable measurements strongly depends on the existence of EW. We have established the necessary and/or sufficient conditions on minimum-error discrimination by separable measurements, that is, $p_{\rm SEP}(\mathcal{E})=p_{\rm G}(\mathcal{E})$, in terms of EW (Theorems~\ref{thm:qmsc} and \ref{thm:qupb}). We have also provided the conditions on the upper bound of the maximum success probability over all possible separable measurements (Theorems~\ref{thm:pptq} and \ref{thm:mnsc}). 
Our results have been illustrated by examples of multidimensional multipartite quantum states.
Finally, we have provided a systematic way in terms of EW to construct multipartite quantum state ensembles showing nonlocality in state discrimination.
%%%%%%%%%%%%%%%%%%%%%%

%%%%%%%%%%%%%%%%%%%%%%
Quantum nonlocality is a key ingredient making quantum states outperform the classical ones in various quantum information processing tasks such as quantum teleportation and quantum cryptography\cite{eker1991,benn1993}. 
It is also known that quantum nonlocality plays an important role in quantum algorithms 
which are more powerful than any classical ones\cite{deut1992,shor1994}. 
As the violation of the conditions in Theorem~\ref{thm:qupb} implies $p_{\rm SEP}(\mathcal{E})<p_{\rm G}(\mathcal{E})$,
which consequently means $p_{\rm L}(\mathcal{E})<p_{\rm G}(\mathcal{E})$, our results provides a useful methodology to guarantee the occurrence of nonlocality in state discrimination.
%%%%%%%%%%%%%%%%%%%%%%

%%%%%%%%%%%%%%%%%%%%%%
Our results establish a specific relation between the properties of EW and minimum-error discrimination by separable measurements, therefore it is natural to investigate the relationship between EW and other measurements.
It is also an interesting future work to construct good conditions, in terms of EW, for optimal state discrimination in other state discrimination strategies.
%%%%%%%%%%%%%%%%%%%%%%

%%%%%%%%%%%%%%%%%%%%%%
%  Acknowledgments   %
%%%%%%%%%%%%%%%%%%%%%%
\section*{ACKNOWLEDGEMENTS}
\noindent
This work was supported by Basic Science Research Program(NRF-2020R1F1A1A010501270) and Quantum Computing Technology Development Program(NRF-2020M3E4A1080088) through the National Research Foundation of Korea(NRF) grant funded by the Korea government(Ministry of Science and ICT).
%%%%%%%%%%%%%%%%%%%%%%

%%%%%%%%%%%%%%%%%%%%%%
%     References     %
%%%%%%%%%%%%%%%%%%%%%%

\end{document}